\DeclareMathOperator{\Wdeg}{wdeg}
\DeclareMathOperator{\F}{\mathbb{F}}
\DeclareMathOperator{\wdeg}{\Wdeg_{\mathit{(1,k-1)}}}
\DeclareMathOperator{\lt}{LT}
\DeclareMathOperator{\ydeg}{ydeg}
\theoremstyle{definition}
\theoremstyle{plain}
\newtheorem{lemma}{Lemma}
\newtheorem{thm}{Theorem}
\theoremstyle{remark}
\newtheorem{rem}{Remark}
\begin{document}
\title{Efficient Interpolation in the Guruswami-Sudan Algorithm}
\author{P.V. Trifonov
\thanks{P.V. Trifonov is with the Distributed Computing and Networking Department,
Saint-Petersburg State Polytechnic University, Russia. e-mail:ptrifonov@ieee.org}}

\markboth{IEEE Transactions on Information Theory, submitted for publication, December 2008}
{P.V. Trifonov. Efficient Interpolation in the Guruswami-Sudan Algorithm}
\maketitle
\begin{abstract}
A novel algorithm is proposed for the interpolation step of the Guruswami-Sudan list decoding algorithm.
The proposed method is based on the binary exponentiation algorithm, and can be considered as an 
extension of the Lee-O'Sullivan algorithm.  The algorithm is shown to achieve
both asymptotical and practical performance gain compared to the case of iterative interpolation algorithm. 
Further complexity reduction is achieved by integrating the proposed method with re-encoding. 
The key contribution of the paper, which enables the complexity reduction,
is a novel randomized ideal multiplication algorithm.
\end{abstract}

\sloppy

\section{Introduction}
The Guruswami-Sudan list decoding algorithm \cite{guruswami98improved} is one of the most powerful decoding methods for Reed-Solomon codes.
Its complexity is known to be polynomial. However, the degree of the polynomial turns out to be too high. Therefore, computationally
efficient algorithms are needed in order to obtain a practical implementation of this method. 

The most computationally intensive step of the Guruswami-Sudan algorithm is construction of a bivariate polynomial
 passing through a number of points with a given multiplicity. In this paper  a novel reduced complexity interpolation algorithm
is presented. It is based on the well-known binary exponentiation
method, so we call it binary interpolation algorithm. The algorithm exploits the relationship between the Gr\"obner bases
of zero-dimensional
ideals and appropriate modules.  The key component of the proposed 
method is a novel randomized fast ideal multiplication algorithm (see Figure \ref{fFastModMul}).
We show also that the interpolation complexity can be further reduced  by
integrating the proposed method  with the re-encoding approach \cite{koetter2003complexity,koetter2003efficient}. 

The paper is organized as follows. Section \ref{sPreliminaries} presents a simple derivation of the Guruswami-Sudan algorithm 
and all necessary background. Section \ref{sRootMulInc} introduces the novel interpolation algorithm. Numeric performance
results are given in  Section \ref{sNumRes}.
 Finally, some conclusions are drawn.
\section{Notation}
\begin{itemize}
\item $\left<Q_i(x,y),0\leq i\leq v\right>=\displaystyle \{\sum_{i=0}^v p_i(x,y)Q_i(x,y)|p_i(x,y)\in \F[x,y]\}$ is the ideal generated by $Q_i(x,y)$.
\item $[Q_i(x,y),0 \leq i \leq v]=\displaystyle\{\sum_{i=0}^v p_i(x)Q_i(x,y)|p_i(x)\in \F[x]\}$ is the module generated by $Q_i(x,y)$.
\item $Q^{[j_1,j_2]}(x_i,y_i)=\displaystyle\sum_{j_1'\geq j_1}\sum_{j_2'\geq 
j_2}{j_1'\choose j_1}{j_2'\choose 
j_2}q_{j_1'j_2'}x_i^{j_1'-j_1}y_i^{j_2'-j_2}$ is the Hasse derivative of 
$Q(x,y)$ at point $(x_i,y_i)$. 
\item $Q(x_i,y_i)=0^r$ means that  $Q(x,y)$ has a root of multiplicity {\em at least} $r$ in $(x_i,y_i)$, i.e. $Q^{[j_1,j_2]}(x_i,y_i)=0, j_1+j_2<r$.
\item $I_r=\{Q(x,y)\in \F[x,y]|Q(x_i,y_i)=0^r,1 \leq i \leq n\}$ is the ideal of polynomials having roots of multiplicity at least $r$ at  points $(x_i,y_i), 1 \leq i \leq n$.
\item $M_{r,\rho}=\{Q(x,y)\in I_r|\Wdeg_{(0,1)}Q(x,y)<\rho\}$.
\item $\lt Q(x,y)$  is the leading term of $Q(x,y)$ with respect to some term ordering.
\item $|\mathcal B|$ is the dimension of vector $\mathcal B$.
\item $\Delta(\mathcal B)=\sum_{j=0}^s t_j$, where $\mathcal 
B=(B_0(x,y),\ldots,B_s(x,y))$ is a Gr\"obner basis of some module, and $\lt B_j(x,y)=a_jx^{t_j}y^j$.
\end{itemize}

\section{Preliminaries}
\label{sPreliminaries}
This section introduces some background information on the Guruswami-Sudan list decoding method,
associated computational algorithms, and various algebraic concepts used in this paper.
\subsection{Term orderings}
Multivariate polynomials are extensively used in this paper, so one needs 
to introduce monomial orderings to deal with them.
$(a,b)$-weighted degree of a monomial $cx^iy^j$  equals $ai+bj$.  
$(a,b)$-weighted degree $\Wdeg_{(a,b)} Q(x,y)$ of a polynomial $Q(x,y)$ equals 
to the maximum of $(a,b)$-weighted degrees of its non-zero terms.  
Weighted degree can be used to define a term ordering. $(a,b)$-weighted degree
 lexicographic ordering is defined as
$cx^iy^j\prec dx^py^q \Leftrightarrow (ai+bj<ap+bq)\vee(ai+bj=ap+bq)\wedge( cx^iy^j\prec_{lex} 
dx^py^q) $. Lexicographic ordering is defined as $cx^iy^j\prec_{lex} dx^py^q \Leftrightarrow (j<q)\vee (j=q)\wedge (i<p)$. 
Leading term $\lt Q(x,y)$ of a polynomial $Q(x,y)=\sum q_{ij}x^iy^j$ is given by $\displaystyle\arg \max_{q_{ij}\neq 0} q_{ij}x^iy^j$.
Multivariate polynomials can be ordered according to their leading terms. 
\subsection{Guruswami-Sudan algorithm}
The Guruswami-Sudan algorithm addresses the problem of list decoding of 
$(n,k,n-k+1)$ Reed-Solomon code over field $\F$. That is, given a
received vector $(y_1,\ldots,y_n)$, it finds all message polynomials 
$f(x)$, such that $\deg f(x)<k$ and $f(x_i)=y_i$ for at least $\tau$ 
distinct code locators $x_i$ \cite{guruswami98improved}. This is 
accomplished by  constructing a polynomial $Q(x,y)$, such that 
$Q(x_i,y_i)=0^r, \wdeg Q(x,y)\leq l$, $\Wdeg_{(0,1)}Q(x,y)<\rho$,
and factoring it. 

It is possible to show that the parameters of this algorithm must satisfy \cite{nielsen98decoding}
\begin{eqnarray}
\label{mRhoBound}
\frac{\rho(\rho-1)}{2}&\leq& \frac{nr(r+1)}{2(k-1)}<\frac{\rho(\rho+1)}{2},\\
\label{mWDegBound}
l&=&\left\lfloor\frac{nr(r+1)}{2\rho}+\frac{(\rho-1)(k-1)}{2}\right\rfloor,\\
\label{mMinNumOfNonErrors}
\tau&=&\left\lfloor\frac{l}{r}\right\rfloor+1>\sqrt{n(k-1)}.
\end{eqnarray}



\subsection{Interpolation}
\label{sInterpolationReview}
Construction of a polynomial $Q(x,y)$ 
 turns out to be the most computationally expensive step of the Guruswami-Sudan algorithm.  
This section presents an overview of two existing algorithms for the interpolation problem. The first one will be used to derive some important
properties of the underlying algebraic structures, and the second will be used as a component of the proposed method.

Observe that the set of polynomials $I_r=\{Q(x,y)\in \F[x,y]|Q(x_i,y_i)=0^r,1 \leq i \leq n\}$ is an ideal. The smallest non-zero polynomial
of this ideal with respect to $(1,k-1)$-weighted degree lexicographic ordering must satisfy the constraints of the Guruswami-Sudan 
algorithm. Such a polynomial is guaranteed to appear in the Gr\"obner basis of $I_r$
with respect to this term ordering \cite{sauer1998polynomial}. However, it 
turns out to be easier to construct a Gr\"obner basis of the module\footnote{The concept of module is similar to the concept of linear vector space, except 
that the former one is based on a ring, while the latter is based on a field.} $M_{r,\rho}=\{Q(x,y)\in I_r|\Wdeg_{(0,1)}Q(x,y)<\rho\}$. 
\subsubsection{Iterative interpolation algorithm}
\label{sIIA}
The algorithm shown in Figure \ref{fIIA} constructs $\rho$  non-zero polynomials $Q_j(x,y), 0 \leq j \leq \rho-1,$ such that 
$Q(x_i,y_i)=0^r, \lt Q_j(x,y)=a_jx^{t_j}y^j$,  and $t_j$ are the smallest 
possible integers \cite{nielsen98decoding,koetter1996fast,okeefe2002grobner}. 
\begin{figure}
\begin{algorithm}{IterativeInterpolation}{n,\{(x_i,y_i),1 \leq i \leq n\},r,\rho}
\begin{FOR}{i\=0  \TO \rho-1}
Q_i(x,y)\=y^i;
\end{FOR}\\
\begin{FOR}{i\=1  \TO n}
\begin{FOR}{\beta\=0 \TO r-1}
\begin{FOR}{\alpha\=0 \TO r-\beta-1}
\Delta_j\=Q_j^{[\alpha,\beta]}(x_i,y_i), 
0 \leq j \leq \rho-1\\
\displaystyle m\=\arg\min_{j:\Delta_j\neq 0}Q_j(x,y)\\
\begin{FOR}{j\neq m}
Q_j(x,y)\=Q_j(x,y)-\frac{\Delta_j}{\Delta_{m}}Q_{j_0}(x,y)
\end{FOR}\\
Q_{m}(x,y)\=Q_{m}(x,y)(x-x_i);
\end{FOR}
\end{FOR}
\end{FOR}\\
\RETURN{(Q_0(x,y),\ldots,Q_{\rho-1}(x,y))}
\end{algorithm}
\caption{Iterative interpolation algorithm (IIA)}
\label{fIIA}
\end{figure}
These polynomials represent a Gr\"obner basis of the
module $M_{r,\rho}=\{Q(x,y)\in I_r|\Wdeg_{(0,1)}Q(x,y)<\rho\}$ \cite{ma2004divide,trifonov2007interpolationEng}. 
In the context of list decoding one has to use $(1,k-1)$-weighted lexicographic ordering. The solution of the interpolation
problem is given by the smallest polynomial in the obtained vector $(Q_0(x,y),\ldots,Q_{\rho-1}(x,y))$.
It can be seen that the complexity of IIA   is given by  $O(n^2r^4\rho)$.

It will be sometimes convenient to represent a vector of polynomials $A_i(x,y)=\sum_{j=0}^{t}y^ja_{ji}(x), 0 \leq i \leq s,$ 
 as $(1,y,\ldots,y^{t})\underbrace{\begin{pmatrix}a_{00}(x)&a_{01}(x)&\ldots&a_{0s}(x)\\
a_{10}(x)&a_{11}(x)&\ldots&a_{1s}(x)\\
\vdots&\vdots&\ddots&\vdots\\
a_{t0}(x)&a_{t1}(x)&\ldots&a_{ts}(x)
\end{pmatrix}}_{\mathcal A(x)}$, where $\mathcal A(x)$ is a $(t+1)\times (s+1)$ polynomial matrix.
\begin{lemma}
\label{lDegDet}
Let $\mathcal Q=(Q_0(x,y),\ldots,Q_{\rho-1}(x,y))$ be a vector of polynomials constructed by IIA for the input $(n,\{(x_i,y_i), 1 \leq i \leq n\},r,\rho)$.
 Then $$\deg \det \mathcal Q(x)=\Delta (\mathcal Q)=n\frac{r(r+1)}{2},$$
where  $\mathcal Q(x)$ is the corresponding polynomial matrix.
\end{lemma}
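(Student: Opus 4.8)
\emph{Proof sketch.} The plan is to deduce both equalities at once by computing $\dim_\F F/M_{r,\rho}$ in three different ways, where $F=[1,y,\ldots,y^{\rho-1}]$ is the free $\F[x]$-module of polynomials of $y$-degree below $\rho$ (so that $M_{r,\rho}\subseteq F$, by the definition of $M_{r,\rho}$). First I would use that $\mathcal Q$ is a Gr\"obner basis of $M_{r,\rho}$ with $\lt Q_j=a_jx^{t_j}y^j$: the leading terms occupy the distinct positions $y^0,\ldots,y^{\rho-1}$, so the leading-term submodule is $\bigoplus_{j=0}^{\rho-1}x^{t_j}\F[x]y^j$, and the standard monomials $\{x^ay^j:0\le j\le\rho-1,\ 0\le a<t_j\}$ form an $\F$-basis of $F/M_{r,\rho}$; hence $\dim_\F F/M_{r,\rho}=\sum_{j=0}^{\rho-1}t_j=\Delta(\mathcal Q)$. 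Next I would note that $\mathcal Q(x)$ is a $\rho\times\rho$ matrix over the PID $\F[x]$, nonsingular because $M_{r,\rho}$ has full rank $\rho$ (it contains $\prod_{i=1}^n(x-x_i)^r\,y^j$ for every $j$); a Smith normal form $U\mathcal Q(x)V=\operatorname{diag}(d_1,\ldots,d_\rho)$ with $U,V\in GL_\rho(\F[x])$ gives $F/M_{r,\rho}\cong\bigoplus_j\F[x]/(d_j)$, so $\dim_\F F/M_{r,\rho}=\sum_j\deg d_j=\deg\det\mathcal Q(x)$. Comparing the two expressions already proves $\deg\det\mathcal Q(x)=\Delta(\mathcal Q)$.

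It remains to evaluate this common value. Writing $\mathfrak m_i=(x-x_i,y-y_i)$, the condition $Q(x_i,y_i)=0^r$ is exactly $Q\in\mathfrak m_i^r$, so $M_{r,\rho}$ is the kernel of the $\F$-linear map $\varphi\colon F\to\bigoplus_{i=1}^n\F[x,y]/\mathfrak m_i^r$ induced by reduction modulo the $\mathfrak m_i^r$. Each target summand has dimension $r(r+1)/2$, hence $\dim_\F F/M_{r,\rho}=\operatorname{rank}\varphi\le nr(r+1)/2$, and I would finish by proving $\varphi$ surjective. For that, for each $i$ and each $(a,b)$ with $a+b<r$ I would use the polynomial $P^{(i)}_{a,b}=(x-x_i)^a(y-y_i)^b\prod_{i'\ne i}(x-x_{i'})^r$: it lies in $F$ because its $y$-degree $b\le r-1$ is below $\rho$ (invoking $\rho\ge r$, which follows from \eqref{mRhoBound} and $k\le n$), it lies in $\mathfrak m_{i'}^r$ for $i'\ne i$, and modulo $\mathfrak m_i^r$ it equals $c_i(x-x_i)^a(y-y_i)^b$ up to terms in $\mathfrak m_i^{a+b+1}$, where $c_i=\prod_{i'\ne i}(x_i-x_{i'})^r\ne 0$ since the code locators are distinct. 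Listing these polynomials by increasing $a+b$ makes their images in the $i$-th summand triangular with nonzero diagonal relative to the basis $\{(x-x_i)^a(y-y_i)^b:a+b<r\}$ of that summand, so they span it; letting $i$ vary shows $\varphi$ is onto, and therefore $\dim_\F F/M_{r,\rho}=nr(r+1)/2$.

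I expect the surjectivity of $\varphi$ to be the only real obstacle: the whole point is that the $nr(r+1)/2$ Hasse-derivative conditions cut the codimension down by exactly that amount rather than merely at most that amount, and it is precisely here that one needs the code locators to be distinct and $\rho\ge r$ (so that vanishing conditions of order up to $r$ are not invisible inside the truncated module $F$). The two algebraic identifications in the first paragraph are routine, relying only on the standard description of a quotient by a module Gr\"obner basis and on the Smith normal form over $\F[x]$.
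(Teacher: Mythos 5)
Your proof is correct, but it proceeds by a genuinely different route than the paper's. The paper's argument is tied to the mechanics of IIA: it observes that each of the $n\binom{r+1}{2}$ Hasse-derivative constraints increases the $x$-degree of exactly one $Q_j$ by one, giving $\Delta(\mathcal Q)=n\frac{r(r+1)}{2}$ directly as a step count; and it simultaneously factors $\mathcal Q(x)$ as a product of $n\binom{r+1}{2}$ elementary matrices $\delta^{(i,\alpha,\beta)}$, each with determinant a nonzero scalar times $(x-x_i)$, so $\deg\det\mathcal Q(x)$ is the same count. Your proof instead computes $\dim_\F F/M_{r,\rho}$ three ways and is essentially algorithm-independent: the standard-monomial count for a Gr\"obner basis gives $\Delta(\mathcal Q)$, the Smith normal form over $\F[x]$ gives $\deg\det\mathcal Q(x)$, and the explicit surjectivity of $\varphi\colon F\to\bigoplus_i\F[x,y]/\mathfrak m_i^r$ (via the Lagrange-type polynomials $P^{(i)}_{a,b}$, using $\rho>r$) gives $n\frac{r(r+1)}{2}$. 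What your version buys is that it proves the statement for \emph{any} Gr\"obner basis of $M_{r,\rho}$ with one leading term per $y$-power, so it subsumes Lemma~\ref{lLTSum1} at once rather than deducing it afterward by comparing leading terms; and it makes explicit the fact — only implicit in the paper, where it is buried in the correctness of IIA — that the $n\binom{r+1}{2}$ vanishing conditions are independent. The cost is that the surjectivity argument is real extra work, whereas the paper gets the count $n\frac{r(r+1)}{2}$ for free from the loop structure of the algorithm.
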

\begin{proof}
Observe that at each iteration
 of IIA the $x$-degree of exactly one polynomial is increased by one. Hence, 
the sum of leading term $x$-degrees of all polynomials after algorithm termination is equal to the number of partial Hasse derivatives 
forced to be zero. On the other hand, this algorithm can be interpreted as construction of the polynomial matrix 
\begin{equation}
\label{mDeltaProd}
\mathcal Q(x)=\prod_{i=1}^n\prod_{\alpha+\beta<r}\delta^{(i,\alpha,\beta)},
\end{equation}
where $$
\delta^{(i,\alpha,\beta)}=\begin{pmatrix}
1&0&\ldots&0&\ldots&0\\
0&1&\ldots&0&\ldots&0\\
\hdotsfor{6}\\
-\frac{\Delta_0}{\Delta_{m}}&-\frac{\Delta_1}{\Delta_{m}}&\ldots&x-x_i&\ldots&-\frac{\Delta_{\rho-1}}{\Delta_{m}}\\
\hdotsfor{6}\\
0&0&\ldots&0&\ldots&1
\end{pmatrix},$$ 
and $m=m(i,\alpha,\beta)$ is the index of the smallest polynomial  selected on line 7 of the algorithm. Obviously, $\det \delta^{(i,\alpha,\beta)}=\gamma_{i,\alpha,\beta}(x-x_i)$ for
some non-zero $\gamma_{i,\alpha,\beta}$, and the number of terms in \eqref{mDeltaProd} is again equal to the number of Hasse derivatives forced to be zero. 
\end{proof}


Observe that for a fixed term ordering
there may exist many different Gr\"obner bases of a module.  However, they share 
the following common property. 
\begin{lemma}
\label{lLTSum1}
Let $\mathcal B=(B_0(x,y),\ldots,B_{\rho-1}(x,y))$ be a Gr\"obner basis of the  module $M_{r,\rho}$. 
Then $\Delta(\mathcal B)=\frac{nr(r+1)}{2}$.
\end{lemma}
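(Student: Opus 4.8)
The plan is to establish two things and then combine them: (i) for every Gr\"obner basis $\mathcal B=(B_0,\dots,B_{\rho-1})$ of $M_{r,\rho}$ of the shape used in the definition of $\Delta$, i.e.\ with $\lt B_j=a_jx^{t_j}y^j$, one has $\Delta(\mathcal B)=\deg\det\mathcal B(x)$; and (ii) the quantity $\deg\det\mathcal B(x)$ does not depend on the choice of such a basis, so by Lemma~\ref{lDegDet} applied to the basis $\mathcal Q$ returned by IIA it equals $nr(r+1)/2$. Putting (i) and (ii) together yields $\Delta(\mathcal B)=nr(r+1)/2$.

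For (ii) I would first note that $M_{r,\rho}$ is a free $\F[x]$-module of rank $\rho$: it is a submodule of the free module $\F[x]\cdot1\oplus\F[x]\cdot y\oplus\cdots\oplus\F[x]\cdot y^{\rho-1}$ over the principal ideal domain $\F[x]$, and it has rank exactly $\rho$ because it already contains the $\F[x]$-independent polynomials $y^j\prod_{i=1}^n(x-x_i)^r$, $0\le j\le\rho-1$. Any Gr\"obner basis $\mathcal B$ as above generates $M_{r,\rho}$, and its elements are $\F[x]$-linearly independent, since in a non-trivial combination $\sum_j p_j(x)B_j$ the leading terms $\lt(p_j)\lt(B_j)$ involve pairwise distinct powers of $y$ and hence cannot cancel. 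So $\mathcal B$ is an $\F[x]$-basis of $M_{r,\rho}$; the same holds for the output $\mathcal Q$ of IIA. Consequently $\mathcal B(x)=\mathcal Q(x)U(x)$ for an invertible $\rho\times\rho$ matrix $U(x)$ over $\F[x]$, whose determinant is therefore a nonzero constant, giving $\deg\det\mathcal B(x)=\deg\det\mathcal Q(x)$; by Lemma~\ref{lDegDet} this is $nr(r+1)/2$.

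For (i) I would expand the determinant by the Leibniz formula. Writing $[\mathcal B(x)]_{ij}$ for the coefficient of $y^i$ in $B_j(x,y)$, each term $x^dy^i$ of $B_j$ satisfies $x^dy^i\preceq\lt B_j=a_jx^{t_j}y^j$ in the $(1,k-1)$-weighted lexicographic order; unwinding that definition shows $\deg_x[\mathcal B(x)]_{jj}=t_j$ with leading coefficient $a_j\neq0$, $\deg_x[\mathcal B(x)]_{ij}\le t_j+(k-1)(j-i)$ for $i<j$, and $\deg_x[\mathcal B(x)]_{ij}\le t_j+(k-1)(j-i)-1<t_j$ for $i>j$ (here the lexicographic tie-break forbids equal weighted degrees). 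Hence the identity permutation contributes a term of degree exactly $\sum_j t_j$, while for any $\sigma\neq\mathrm{id}$ the corresponding product has degree at most $\sum_j t_j+(k-1)\sum_j(j-\sigma(j))-\#\{j:\sigma(j)>j\}=\sum_j t_j-\#\{j:\sigma(j)>j\}\le\sum_j t_j-1$, because $\sum_j(j-\sigma(j))=0$ and a non-identity permutation must satisfy $\sigma(j)>j$ for at least one $j$. Thus the identity term is the unique term of top degree and cannot be cancelled, so $\deg\det\mathcal B(x)=\sum_j t_j=\Delta(\mathcal B)$.

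The routine parts are the module-theoretic statements in (ii); the main obstacle is the bookkeeping in (i) that turns the three term-order inequalities into the estimate $\sum_j t_j-1$ for every non-identity permutation. An alternative, determinant-free route is worth mentioning: any two minimal Gr\"obner bases of a module have the same leading terms, these being the (unique) minimal generators of the initial module of $M_{r,\rho}$, and IIA produces a minimal one with $\Delta(\mathcal Q)=nr(r+1)/2$; going this way one only has to check separately that the $\mathcal B$ of the hypothesis is minimal, which the argument in (i)--(ii) in fact forces.
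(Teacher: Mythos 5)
Your proof is correct, but your main route differs substantially from the paper's. The paper's argument is a one-liner built on a divisibility observation: if $\mathcal Q$ is the IIA output and $\mathcal B$ is any Gr\"obner basis with $\lt B_j=a_jx^{t_j}y^j$, then because both are Gr\"obner bases of the same module and divisibility of leading terms over $\F[x]$ cannot change the power of $y$, one gets $\lt Q_j\mid\lt B_j$ and $\lt B_j\mid\lt Q_j$, hence $t_j=t_j'$ and $\Delta(\mathcal B)=\Delta(\mathcal Q)=nr(r+1)/2$ by Lemma~\ref{lDegDet}. That is exactly your "determinant-free alternative" at the end, and it is both shorter and cleaner; note that it also disposes of the minimality concern you raise (divisibility plus the fixed $y$-degree of each leading term forces the leading terms to coincide, so a basis of the assumed shape is automatically minimal --- there is nothing circular to check). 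Your primary route, by contrast, proves $\Delta(\mathcal B)=\deg\det\mathcal B(x)$ via the Leibniz expansion with careful degree bounds on the entries, and then basis-independence of $\deg\det$ via the free-$\F[x]$-module structure of $M_{r,\rho}$. That chain of reasoning is sound (your entry-degree estimates and the $\sum_j(j-\sigma(j))=0$ cancellation are correct, and the rank-$\rho$ freeness argument is fine), and it has the mild virtue of not needing any facts about Gr\"obner-basis leading terms beyond the shape hypothesis; but it is considerably longer, and in the paper's context Lemma~\ref{lDegDet} already did the determinant work for the specific basis $\mathcal Q$, so the short divisibility argument is the natural way to transfer the count to an arbitrary~$\mathcal B$.
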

\begin{proof}
Let $Q_j(x,y),0 \leq j \leq \rho-1$ be the Gr\"obner basis of $M_{r,\rho}$ constructed by IIA for the same term ordering. 
Then  $\lt Q_j(x,y)|\lt B_j(x,y)$ and $\lt B_j(x,y)|\lt Q_j(x,y)$. 
This means that the leading terms of $B_j(x,y)$ and $Q_j(x,y)$ are the same up to a constant in $\F$, and the statement
follows from Lemma \ref{lDegDet}.
\end{proof}


\subsubsection{Transformation of module basis}
\label{sBasisChange}
It was shown in \cite{lee2006interpolation,alekhnovich2005linear,lee2008list,trifonov2008relationship} that the ideal of interpolation
polynomials  $I_r$  is generated by 
\begin{equation}
\label{mMultipleRootIdeal}
\Pi_{r,j}(x,y)=(y-T(x))^j\phi^{r-j}(x),0 \leq j \leq r,
\end{equation}
 where $T(x_i)=y_i, 1 \leq i \leq n,$ 
and $\phi(x)=\prod_{i=1}^n(x-x_i)$. Hence, the basis of the module $M_{r,\rho}$ is given by $\mathcal L=(\Pi_{r,0}(x,y),\ldots,\Pi_{r,r}(x,y),\Pi_{r,r+1}(x,y),\ldots,\Pi_{r,\rho-1}(x,y))$, where
\begin{equation}
\label{mMultipleRootIdeal1}
\Pi_{r,r+j}(x,y)=y^j\Pi_{r,r}(x,y),0< j<\rho-r.
\end{equation}
\begin{lemma}
\label{lBuchbergerSPair}
The polynomials $(S_0(x,y),\ldots,S_{s-1}(x,y))$ represent a Gr\"obner basis of the module $M=\{\sum_{j=0}^{s-1} S_j(x,y)a_j(x)|a_j(x)\in \F[x]\}$ if 
$\ydeg S_i(x,y), 0 \leq i \leq s-1$ are distinct values. 
\end{lemma}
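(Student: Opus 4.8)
The plan is to verify directly the defining property of a Gr\"obner basis. Since the $S_i$ already generate $M$ over $\F[x]$ by hypothesis, it suffices to show that for every non-zero $f\in M$ the leading term $\lt f$ (with respect to the $(1,k-1)$-weighted lexicographic order) is an $\F[x]$-multiple of $\lt S_i$ for some $i$. Reorder the generators so that $d_i:=\ydeg S_i$ satisfy $d_0<d_1<\dots<d_{s-1}$ --- possible by assumption --- and write $\lt S_i=c_i x^{t_i}y^{d_i}$; thus the leading terms sit in pairwise distinct $y$-degree components of $\F[x,y]$.

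First I would fix a representation $f=\sum_{i=0}^{s-1}a_i(x)S_i(x,y)$ with not all $a_i$ zero and set $w_i=\deg a_i+t_i+(k-1)d_i=\wdeg(a_iS_i)$ for each $i$ with $a_i\neq 0$. Let $W=\max\{w_i:a_i\neq 0\}$, and among the indices attaining $W$ let $m$ be the one with the largest $d_m$; this is well defined because the $d_i$ are distinct.

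The crux is the claim that $x^{\deg a_m+t_m}y^{d_m}$ occurs in $f$ with a non-zero coefficient and equals $\lt f$. I would argue it as follows. By the definition of the leading term in the weighted-lexicographic order, every term of $S_i$ of weighted degree $t_i+(k-1)d_i$ has $y$-degree at most $d_i$, and $x^{t_i}y^{d_i}$ is the only such term of $y$-degree exactly $d_i$; multiplication by $a_i(x)$ shifts $x$-degrees only, so every term of $a_iS_i$ of weighted degree $W$ has $y$-degree $\le d_i\le d_m$, with equality $d_i=d_m$ holding only for $i=m$. Hence the $y^{d_m}$-part of the weighted-degree-$W$ component of $f$ comes solely from $a_mS_m$ and equals $\mathrm{lc}(a_m)c_m\,x^{\deg a_m+t_m}y^{d_m}\neq 0$, so it is not cancelled. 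Since no summand has weighted degree exceeding $W$, the polynomial $f$ has weighted degree $W$, and this surviving term has the largest $y$-degree among the weighted-degree-$W$ terms of $f$ (its $x$-degree being then forced), hence it is $\prec$-maximal, i.e. $\lt f=\mathrm{const}\cdot x^{\deg a_m}\lt S_m$. This is an $\F[x]$-multiple of $\lt S_m$, completing the argument.

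I expect the only real obstacle to be the bookkeeping in the previous paragraph: correctly exploiting the lexicographic tie-break on $y$-degree within a fixed weighted degree, and checking that the distinguished monomial genuinely survives the summation over $i$. An alternative, essentially equivalent route is to invoke the module version of Buchberger's S-pair criterion: because the $\lt S_i$ lie in pairwise distinct $y$-degree components they admit no common monomial multiple, so every S-polynomial is vacuously zero and the criterion yields the claim at once.
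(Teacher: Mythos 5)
Your direct verification is correct and works from first principles: by tracking which summand $a_m S_m$ contributes the $\prec$-maximal monomial to $f=\sum a_iS_i$ and observing that the distinct $y$-degrees of the $\lt S_i$ prevent cancellation of that monomial, you conclude that $\lt f$ is an $\F[x]$-multiple of $\lt S_m$. The paper's proof is a one-liner invoking the Buchberger S-pair criterion --- exactly the alternative route you sketch in your last sentence --- under which all S-pairs are vacuous because no two leading terms share a $y$-degree component. Both rest on the same observation; your main argument is more elementary and self-contained, while the paper's simply cites the criterion as a black box.

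One point worth flagging: your step ``write $\lt S_i = c_i x^{t_i}y^{d_i}$'' with $d_i := \ydeg S_i$ silently reads $\ydeg S_i$ as the $y$-degree of the \emph{leading term}. The paper never defines $\ydeg$, and under the other plausible reading, $\ydeg = \Wdeg_{(0,1)}$ (maximal $y$-degree of any monomial), the lemma as stated would actually be false. For instance, with $k-1=1$, take $S_0 = y + x^{200}$ and $S_1 = y^2 + x^{100}$; then $\Wdeg_{(0,1)}S_0 = 1 \neq 2 = \Wdeg_{(0,1)}S_1$, yet $\lt S_0 = x^{200}$ and $\lt S_1 = x^{100}$ both have $y$-degree zero, and $S_0 - x^{100}S_1 = y - x^{100}y^2$ has leading term $-x^{100}y^2$, which is not an $\F[x]$-multiple of either $\lt S_0$ or $\lt S_1$. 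Your interpretation is the one under which the lemma is true and which every application in the paper (where $\lt S_j = \alpha_j x^{t_j}y^j$ always holds) implicitly requires, so the proof is sound; but strictly the hypothesis ought to read ``the leading terms of the $S_i$ have pairwise distinct $y$-degrees,'' and your argument is careful to use only that.
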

\begin{proof}
The lemma follows from the Buchberger S-pair criterion \cite{becker93grobner}.
\end{proof}
The above described basis $\mathcal L$ has to be transformed 
into a Gr\"obner one with respect to  $(1,k-1)$-weighted 
degree lexicographic monomial ordering. This can be 
done with the  algorithm given in  
\cite{alekhnovich2005linear,lee2008list}, which can be considered as a simplified instance of the Buchberger
algorithm. It is convenient to present it here in a slightly modified form.
Namely,  this algorithm takes as input some polynomial $P(x,y)$, Gr\"obner basis $(S_0(x,y),\ldots,S_{i-1}(x,y))$ of some module
 $M\subset\F[x,y]$, and constructs a  Gr\"obner basis of module $M'=\{Q(x,y)+a(x)P(x,y)|Q(x,y)\in M, a(x)\in \F[x]\}$. The algorithm is shown
in Figure \ref{fGenEuclid}.
\begin{figure}
\begin{algorithm}{Reduce}{(S_0(x,y),\ldots,S_{i-1}(x,y)),P(x,y)}
S_i(x,y)\=P(x,y)\\
\begin{WHILE}{\exists j:(0\leq j<i) \wedge (\ydeg S_j(x,y)=\ydeg S_i(x,y))}
\begin{IF}{\lt S_i(x,y)|\lt S_j(x,y)}
W(x,y)\=S_j(x,y)-\frac{\lt S_j(x,y)}{\lt S_i(x,y)}S_i(x,y)\\
S_j(x,y)\=S_i(x,y)\\
S_i(x,y)\=W(x,y)
\ELSE
S_i(x,y)\=S_i(x,y)-\frac{\lt S_i(x,y)}{\lt S_j(x,y)}S_j(x,y)
\end{IF}
\end{WHILE}\\
\begin{IF}{S_i(x,y)=0}
i\=i-1
\end{IF}\\
\RETURN (S_0(x,y),\ldots,S_i(x,y))
\end{algorithm}
\caption{Multi-dimensional Euclidean algorithm}
\label{fGenEuclid}
\end{figure}
\begin{lemma}
\label{lReduceGroebner}
Let $S_j(x,y),0 \leq j \leq i-1$ be the polynomials such that $\lt S_j(x)=\alpha_j x^{t_j}y^j, \Wdeg_{(0,1)}S_j(x,y)<i$. 
Then the $Reduce$ algorithm  constructs a Gr\"obner basis of the module $M=[S_0(x,y),\ldots,S_{i-1}(x,y),P(x,y)]$.
\end{lemma}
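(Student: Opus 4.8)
The plan is to analyse a run of $Reduce$ on the stated input and to prove three things: \textbf{(a)} each pass through the \textbf{while} loop, and the concluding adjustment of $i$, leaves the $\F[x]$-module generated by the current polynomials unchanged, so that it stays equal to $M=[S_0(x,y),\ldots,S_{i-1}(x,y),P(x,y)]$; \textbf{(b)} the loop terminates; and \textbf{(c)} on exit the returned polynomials have pairwise distinct $y$-degrees. Given (a)--(c), Lemma~\ref{lBuchbergerSPair} at once gives that the returned vector is a Gr\"obner basis of $M$, which is the claim. (By the hypothesis $\lt S_j(x,y)=\alpha_j x^{t_j}y^j$ the $y$-degrees of $S_0(x,y),\ldots,S_{i-1}(x,y)$ are the distinct values $0,1,\ldots,i-1$, so by Lemma~\ref{lBuchbergerSPair} they already form a Gr\"obner basis of $[S_0(x,y),\ldots,S_{i-1}(x,y)]$; the algorithm only ever combines them with the freshly appended $S_i(x,y)=P(x,y)$.)

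The property that makes all the operations meaningful is that the leading term of every current polynomial $S_l(x,y)$ has $y$-degree equal to $\ydeg S_l(x,y)$; for $l<i$ this is immediate from the hypothesis $\lt S_j(x,y)=\alpha_j x^{t_j}y^j$, and for $l=i$ I would carry it as a loop invariant. Using it, inside the loop body the index $j$ has $\ydeg S_j=\ydeg S_i=:d$, hence both $\lt S_i$ and $\lt S_j$ are monomials of $y$-degree exactly $d$, so one of them divides the other; whichever quotient the algorithm forms, $\frac{\lt S_j}{\lt S_i}$ in the \textbf{if} branch or $\frac{\lt S_i}{\lt S_j}$ in the \textbf{else} branch, is therefore a scalar times a power of $x$, i.e.\ an element of $\F[x]$. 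Thus the \textbf{if} branch replaces the ordered pair $(S_i,S_j)$ by $\big(S_j-\frac{\lt S_j}{\lt S_i}S_i,\ S_i\big)$ and the \textbf{else} branch by $\big(S_i-\frac{\lt S_i}{\lt S_j}S_j,\ S_j\big)$; as $\F[x]$-matrices acting on $(S_i,S_j)^{\top}$ these have determinant $-1$ and $1$ respectively, units of $\F[x]$, hence are invertible $\F[x]$-linear substitutions. So the $\F[x]$-span of the whole collection is preserved; starting from $M$ it stays $M$, and the final "if $S_i=0$ then $i\gets i-1$" merely deletes a zero generator. I would then verify the invariant is restored: slot $j$ receives the old $S_i$, which keeps its leading term and has $y$-degree $d=j$; slot $i$ receives a polynomial of $y$-degree at most $d<i$, and one must check its leading term still realises that $y$-degree --- this is the one genuinely non-formal step.

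For \textbf{(b)}, I would use the multiset $\{\lt S_0(x,y),\ldots,\lt S_i(x,y)\}$ ordered by the multiset extension of the term ordering $\prec$. In the \textbf{else} branch the leading term of $S_i$ is cancelled, so $\lt S_i$ is replaced in the multiset by a strictly $\prec$-smaller term (or removed, if the difference is $0$), the other entries being untouched. In the \textbf{if} branch, after the swap slot $j$ carries the old $S_i$ with its old leading term while slot $i$ carries $W=S_j-\frac{\lt S_j}{\lt S_i}S_i$, whose $\lt S_j$-term has been cancelled, so the old $\lt S_j$ is replaced in the multiset by something strictly $\prec$-smaller. In both cases the multiset strictly decreases, and since $\prec$ well-orders the monomials in $x,y$ its multiset extension is well-founded; hence the loop runs only finitely often.

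Finally, \textbf{(c)} falls out of the invariant: throughout the run $\ydeg S_l=l$ for $0\le l<i$, and once the loop has been entered $\ydeg S_i<i$ stays true, so the negation of the loop condition would force $\ydeg S_i$ to be both outside $\{0,\ldots,i-1\}$ and less than $i$, which is only possible if $S_i=0$; after the decrement the returned vector has $y$-degrees $0,\ldots,i-1$. If the loop body is never entered then $\ydeg P\notin\{0,\ldots,i-1\}$ and the returned vector has $y$-degrees $0,\ldots,i-1,\ydeg P$; either way they are pairwise distinct and Lemma~\ref{lBuchbergerSPair} applies. I expect the main obstacle to be precisely the leading-term/$y$-degree alignment: it is what keeps $Reduce$ inside the $\F[x]$-module $M$ (rather than the larger $\F[x,y]$-module) and even makes the \textbf{else}-branch quotient a legitimate monomial, and it is the only place where the proof must appeal to the interplay between the term ordering and the $y$-graded structure of $\F[x,y]$ rather than to formal module manipulations.
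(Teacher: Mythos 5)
Your proof follows the paper's own (one-sentence) argument exactly: the row operations in $Reduce$ are unimodular over $\F[x]$, so the module is preserved, and the returned generators have pairwise distinct $\ydeg$, so Lemma~\ref{lBuchbergerSPair} applies; you add a termination argument that the paper omits. The ``genuinely non-formal step'' you flag is in fact vacuous once one reads $\ydeg Q$ as the $y$-degree of $\lt Q$ (the reading forced by Lemma~\ref{lBuchbergerSPair}, whose S-pair criterion concerns leading terms living in distinct $y^j$-components): with that reading the alignment of $\ydeg$ with the leading term is a tautology, both quotients $\lt S_j/\lt S_i$ and $\lt S_i/\lt S_j$ are automatically $x$-monomials, and the gap you worried about closes for free.
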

\begin{proof}
This statement follows from  Lemma \ref{lBuchbergerSPair} and 
invertibility of transformations used by the algorithm.
\end{proof}
The required Gr\"obner basis is obtained as $\mathcal S_{\rho-1}$, where 
\begin{equation}
\label{mGBExtension}
\mathcal S_j=Reduce(\mathcal S_{j-1},\Pi_{r,j}(x,y)), \mathcal S_0=(\Pi_{r,0}(x,y)).
\end{equation}
 The complexity of this method is given by $O(n^4k^{-2}r^5)$  \cite{lee2008list}.
Curiously, if $(1,k-1)$-weighted degree lexicographic ordering is used and  
$r=1,\rho=2$, it reduces to  the Gao decoding
 method \cite{gao2003new,fedorenko2005simple}, with function $Reduce$ being the standard
extended Euclidean algorithm with early termination
condition. Therefore, $Reduce$  will be referred to as the multi-dimensional Euclidean algorithm.

\section{Binary interpolation algorithm}
\label{sRootMulInc}
This section introduces a novel interpolation algorithm. The main idea of this algorithm is to construct a sequence
of ideals and  modules of polynomials having roots $(x_i,y_i)$ with increasing  multiplicity. The proposed method
can be considered as an application of the  well-known binary exponentiation algorithm to zero-dimensional ideals.
\subsection{Interpolation via ideal multiplication}
The main drawback of the method given by \eqref{mGBExtension} is that one has to manipulate with the polynomials having large
common divisors. For example, $\mathcal S_1=Reduce((\Pi_{r,0}(x,y)),\Pi_{r,1}(x,y))=Reduce((\phi^r(x)),\phi^{r-1}(x)(y-T(x)))=\phi^{r-1}(x)Reduce((\phi(x)),y-T(x))$.
Furthermore, polynomial exponentiation is used in 
\eqref{mMultipleRootIdeal}. The method proposed in this paper avoids both reducing the polynomials 
with large GCD, and computing large powers of polynomials. This is 
achieved by first constructing Gr\"obner bases for small root multiplicities, and using them to 
obtain bases for larger root multiplicities.

\begin{lemma}
Let $I_r=\{Q(x,y)\in \F[x,y]|Q(x_i,y_i)=0^r, 1 \leq i \leq n\}$. Then $I_{r_1+r_2}=I_{r_1}\cdot I_{r_2}$.
\end{lemma}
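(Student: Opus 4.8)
The plan is to prove the two inclusions $I_{r_1}\cdot I_{r_2}\subseteq I_{r_1+r_2}$ and $I_{r_1+r_2}\subseteq I_{r_1}\cdot I_{r_2}$ separately. The first inclusion is the easy direction and amounts to a statement about Hasse derivatives of products: if $Q(x_i,y_i)=0^{r_1}$ and $R(x_i,y_i)=0^{r_2}$, then by the Leibniz rule for Hasse derivatives, $(QR)^{[j_1,j_2]}(x_i,y_i)$ is a sum of terms $Q^{[a_1,a_2]}(x_i,y_i)R^{[b_1,b_2]}(x_i,y_i)$ with $a_1+b_1=j_1$, $a_2+b_2=j_2$; whenever $j_1+j_2<r_1+r_2$ we must have $a_1+a_2<r_1$ or $b_1+b_2<r_2$, so each term vanishes and $QR$ has a root of multiplicity at least $r_1+r_2$ at $(x_i,y_i)$. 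Since $I_{r_1+r_2}$ is an ideal, arbitrary $\F[x,y]$-combinations of such products also lie in $I_{r_1+r_2}$, giving $I_{r_1}\cdot I_{r_2}\subseteq I_{r_1+r_2}$.

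For the reverse inclusion I would use the explicit generating set from \eqref{mMultipleRootIdeal}: $I_r=\left<(y-T(x))^j\phi^{r-j}(x),\ 0\leq j\leq r\right>$, where $T(x_i)=y_i$ and $\phi(x)=\prod_{i=1}^n(x-x_i)$. The key observation is that the generators of $I_{r_1+r_2}$ factor through generators of $I_{r_1}$ and $I_{r_2}$: for $0\leq j\leq r_1+r_2$ one can write $(y-T(x))^j\phi^{r_1+r_2-j}(x)$ as a product $(y-T(x))^{j_1}\phi^{r_1-j_1}(x)\cdot(y-T(x))^{j_2}\phi^{r_2-j_2}(x)$ with $j_1+j_2=j$, $0\leq j_1\leq r_1$, $0\leq j_2\leq r_2$ — such a split always exists since $j\leq r_1+r_2$. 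Hence every generator of $I_{r_1+r_2}$ lies in $I_{r_1}\cdot I_{r_2}$, and since $I_{r_1}\cdot I_{r_2}$ is an ideal containing a generating set of $I_{r_1+r_2}$, we get $I_{r_1+r_2}\subseteq I_{r_1}\cdot I_{r_2}$.

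I expect the main obstacle to be rigor around the cited fact that $I_r$ is exactly generated by $\{(y-T(x))^j\phi^{r-j}(x)\}_{0\le j\le r}$ rather than merely contained in that ideal; if one wants a self-contained argument one would instead argue directly with Hasse derivatives. In that case the reverse inclusion is harder: given $Q\in I_{r_1+r_2}$, one must exhibit it as a combination $\sum A_k B_k$ with $A_k\in I_{r_1}$, $B_k\in I_{r_2}$. A clean way is a localization/Chinese-remainder argument — the vanishing conditions at distinct points $(x_i,y_i)$ are independent, so it suffices to treat one point, translate it to the origin, and observe that the ideal of polynomials vanishing to order $r$ at the origin in the local ring is the $r$-th power of the maximal ideal $(x,y)$, and $\mathfrak{m}^{r_1+r_2}=\mathfrak{m}^{r_1}\mathfrak{m}^{r_2}$; then globalize. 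Given the paper's style, though, I would simply invoke \eqref{mMultipleRootIdeal} and present the short two-line product-splitting argument above, which is the path of least resistance.
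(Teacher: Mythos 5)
Your proposal is correct and your preferred route is exactly the paper's: the paper's proof is a one-line chain of ideal equalities, rewriting the generating set $\{(y-T)^j\phi^{r_1+r_2-j}\}_{0\le j\le r_1+r_2}$ of $I_{r_1+r_2}$ as $\{(y-T)^{j_1+j_2}\phi^{(r_1-j_1)+(r_2-j_2)}\}_{0\le j_1\le r_1,\,0\le j_2\le r_2}$ and factoring each product, so both inclusions are handled simultaneously by the same generator bookkeeping. The Leibniz-rule and localization arguments you sketch for the forward inclusion are correct but unnecessary once \eqref{mMultipleRootIdeal} is invoked for all three ideals.
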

\begin{proof}
$I_{r_1+r_2}=<(y-T(x))^j\phi^{r_1+r_2-j}(x),0 \leq j \leq r_1+r_2>=<(y-T(x))^{j_1+j_2}\phi^{r_1-j_1+r_2-j_2}(x),j_1=0,\ldots,r_1,j_2=0,\ldots,r_2>=<(y-T(x))^{j_1}\phi^{r_1-j_1}(x),j_1=0,\ldots,r_1>\cdot 
<(y-T(x))^{j_2}\phi^{r_2-j_2}(x),j_2=0,\ldots,r_2>=I_{r_1}\cdot I_{r_2}$.
\end{proof}
This lemma implies that $I_r=\underbrace{I_1\cdot I_1\cdots I_1}_{r\text{times}}=I_1^r$.
One can avoid repeated calculations and reduce the overall number of calls to the $Reduce$ algorithm
by using the  binary exponentiation method   \cite{KnuthArt2}. Namely, one can compute 
$$I_r=I_1^r=(\ldots((I_1^2\cdot I_1^{r_{m-1}})^2\cdot I_1^{r_{m-2}})^2\cdot I_1^{r_{m-3}}\cdots I_1^{r_{1}})^2\cdot I_1^{r_{0}}$$
where   $r=\sum_{j=0}^m r_j2^j, r_m=1$, $I^2=I\cdot I$, $I^0=\F[x,y]$, and $I\cdot \F[x,y]=I$. 

The key problem addressed in this paper is 
how to construct efficiently a Gr\"obner basis of the product of ideals 
$I'=\left<P_0(x,y),\ldots,P_u(x,y)\right>$ and $I''={\left<S_0(x,y),\ldots,S_v(x,y)\right>}$.  
The standard way is given by 
\begin{equation}
\label{mIdealProduct}
I'\cdot I''=\left<P_i(x,y)S_j(x,y),0 \leq i \leq u,0 \leq j \leq v\right>,
\end{equation}
 i.e. to compute pairwise products of all basis elements of the 
ideals being multiplied. This requires $(u+1)(v+1)$ bivariate polynomial
multiplications, and the basis of $I'\cdot I''$ obtained in such way is extremely redundant.
Furthermore, Buchberger algorithm must be used in order to obtain a 
Gr\"obner basis of $I_r$.

To the best of author knowledge, the problem of efficient ideal multiplication was not considered in the literature, except in \cite{trifonov2007interpolationEng}, 
where multiplication of zero-dimensional co-prime ideals was reduced  to linear convolution. However, the ideals considered in this paper are not co-prime. 

This problem can be again solved by constructing at each step of the binary exponentiation algorithm a basis of the module of polynomials
with limited $(0,1)$-weighted degree. 
\begin{lemma}
\label{lLTSum}
Consider the polynomials $P_j(x,y): P_j(x_i,y_i)=0^s, 1 \leq i \leq n, 0 \leq j \leq m$, such that $\lt P_j(x,y)=a_jx^{t_j}y^j$, $\Wdeg_{(0,1)}P_j(x,y)\leq m$,
 $t_m=0$, 
and 
\begin{equation}
\label{mLTSum}
\Delta\left((P_0(x,y),\ldots,P_m(x,y))\right)=\frac{ns(s+1)}{2}.
\end{equation}
 Then $I_s=\left<P_j(x,y),0 \leq j \leq m\right>$, and the polynomials $P_j(x,y)$ constitute a Gr\"obner basis of this ideal.
\end{lemma}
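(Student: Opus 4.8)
The plan is to prove the statement by a dimension count at the level of ideals, bypassing the module constructions. Write $L=\langle\,\lt Q : Q\in I_s,\ Q\neq 0\,\rangle$ for the leading-term ideal of $I_s$ with respect to the fixed term ordering, and $J=\langle\lt P_j : 0\le j\le m\rangle=\langle x^{t_j}y^j : 0\le j\le m\rangle$. The whole argument reduces to the single identity $J=L$: once this is known, $J=L$ together with $P_j\in I_s$ is exactly the statement that the $P_j$ form a Gr\"obner basis of $I_s$, and since a Gr\"obner basis generates its ideal we also get $I_s=\langle P_j(x,y),0\le j\le m\rangle$.

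First I would note the easy inclusion and turn it into a lower bound for $\dim\F[x,y]/J$. Each $P_j\in I_s$, so $\lt P_j\in L$ and hence $J\subseteq L$; consequently $\dim\F[x,y]/J\ge\dim\F[x,y]/L$. By standard Gr\"obner-basis theory \cite{becker93grobner} the monomials outside $L$ form an $\F$-basis of $\F[x,y]/I_s$, so $\dim\F[x,y]/L=\dim\F[x,y]/I_s$; and this last quantity equals $\tfrac{ns(s+1)}{2}$, because $I_s$ is the solution space of exactly that many linearly independent linear conditions — the Hasse-derivative constraints $Q^{[j_1,j_2]}(x_i,y_i)=0$, $j_1+j_2<s$, at the $n$ distinct points, which is the same count underlying \eqref{mRhoBound}. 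Hence $\dim\F[x,y]/J\ge\tfrac{ns(s+1)}{2}$.

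The main step, and the place where the hypotheses $\lt P_j=a_jx^{t_j}y^j$, $t_m=0$ and \eqref{mLTSum} are really used, is the matching upper bound $\dim\F[x,y]/J\le\tfrac{ns(s+1)}{2}$, obtained by counting the monomials outside $J$. Since $t_m=0$ we have $y^m\in J$, so a monomial $x^ay^b$ outside $J$ must have $b<m$, and for such $b$ one checks $x^ay^b\notin J$ iff $a<\min_{0\le j\le b}t_j$. Summing over $b$ yields $\dim\F[x,y]/J=\sum_{b=0}^{m-1}\min_{0\le j\le b}t_j\le\sum_{b=0}^{m-1}t_b=\sum_{j=0}^{m}t_j=\tfrac{ns(s+1)}{2}$, the last two equalities using $t_m=0$ and \eqref{mLTSum}. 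Together with the lower bound this forces $\dim\F[x,y]/J=\dim\F[x,y]/L$; since $J\subseteq L$ and the two quotients are finite-dimensional of equal dimension, $J=L$, which by the reduction above completes the proof. The delicate point is keeping this estimate tight: $t_m=0$ is what makes $\F[x,y]/J$ finite-dimensional, and the trivial bound $\min_{j\le b}t_j\le t_b$ is exactly what lets the upper and lower estimates meet (so that, a posteriori, the hypotheses force $t_0\ge t_1\ge\cdots\ge t_{m-1}$). I would also remark that the hypothesis $\Wdeg_{(0,1)}P_j\le m$ is not used above; via Lemmas \ref{lLTSum1} and \ref{lDegDet} it gives the additional fact that the $P_j$ form a Gr\"obner basis of the module $M_{s,m+1}$, but that alone does not yield the ideal statement.
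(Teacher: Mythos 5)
Your proof is correct but takes a genuinely different route from the paper. The paper argues by contradiction through the module machinery: a hypothetical nonzero remainder $R$ of the division of some $S\in I_s$ by the $P_j$ would have $\Wdeg_{(0,1)}R<m$, hence lie in $M_{s,m}$, and would force the existence of a module Gr\"obner basis with $\Delta<\tfrac{ns(s+1)}{2}$, contradicting Lemma~\ref{lLTSum1} (which in turn rests on Lemma~\ref{lDegDet} and the bookkeeping of $x$-degree increments inside the IIA). You instead squeeze $\dim_\F\F[x,y]/\langle\lt P_j\rangle$ between the codimension of $I_s$ and the staircase count $\sum_j t_j$, working entirely at the level of the ideal and its leading-term ideal; this is cleaner, avoids Lemma~\ref{lLTSum1} altogether, and exposes more sharply which hypotheses do what (you correctly observe that $\Wdeg_{(0,1)}P_j\le m$ is extraneous to the conclusion, and your argument shows a posteriori that the $t_j$ must be non-increasing). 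The one input you cite rather than prove is $\dim_\F\F[x,y]/I_s=\tfrac{ns(s+1)}{2}$, i.e.\ the linear \emph{independence} of the $n\cdot\tfrac{s(s+1)}{2}$ Hasse-derivative functionals $Q\mapsto Q^{[j_1,j_2]}(x_i,y_i)$; your appeal to ``the count underlying \eqref{mRhoBound}'' gives the number of conditions but says nothing about their independence. That independence is standard (e.g.\ $I_s=\bigcap_i\mathfrak m_i^{s}$ with pairwise comaximal powers $\mathfrak m_i^{s}$, so $\F[x,y]/I_s\cong\prod_i\F[x,y]/\mathfrak m_i^{s}$ by the Chinese remainder theorem, and each factor has dimension $\tfrac{s(s+1)}{2}$), but note it is also precisely the fact that the paper's Lemma~\ref{lDegDet} re-derives indirectly via the IIA --- so you have not escaped the underlying linear-algebra fact, only repackaged it in a form that bypasses the module lemmas.
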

\begin{proof}
Observe that the polynomials $P_j(x,y)$ represent a Gr\"obner basis of some module by lemma \ref{lBuchbergerSPair}.
Obviously, $\left<P_j(x,y),0 \leq j \leq m\right>\subset I_s$. Suppose  that the polynomials $P_j(x,y)$ do not constitute a Gr\"obner basis
of $I_s$. That is, there exists $S(x,y)\in I_s: S(x,y)=\sum_{j=0}^m q_j(x,y)P_j(x,y)+R(x,y)$, where the terms 
of $R(x,y)$ are not divisible by $\lt P_j(x,y)$, i.e. $\Wdeg_{(0,1)}R(x,y)<m$ and $\lt R(x,y)=\beta x^uy^v, u<t_v$. Observe
that $R(x,y)\in M_{s,m}$. This means that the polynomials $P_j(x,y), 0 \leq j \leq m-1,$ do not represent a Gr\"obner basis of module $M_{s,m}$.  The true 
Gr\"obner basis of this module should consist of smaller polynomials, i.e. the sum of $x$-degrees of their leading terms should be less
than $\frac{ns(s+1)}{2}$. But this contradicts to Lemma \ref{lLTSum1}. Hence, $R(x,y)=0$ and $P_j(x,y)$
constitute a Gr\"obner basis of $I_s$.  
\end{proof}
Observe that there may exist Gr\"obner bases of $I_s$ not satisfying the constraints of this lemma.

Let $I_{r_1}=\left<P_j(x,y), 0 \leq j \leq u\right>, I_{r_2}=\left<S_i(x,y),0 \leq i \leq v\right>$ be the ideals given by their Gr\"obner bases satisfying the above lemma.
One can   construct a Gr\"obner basis of the   product $I_{r_1+r_2}$ of these ideals as follows. Let $(m_j',m_j''), 0 \leq j \leq (u+1)(v+1)-1$ be a
 sequence of distinct pairs of
integers such that $0\leq m_j'\leq u, 0\leq m_j''\leq v$, and $\lt \left(P_{m_j'}(x,y)S_{m_j''}(x,y)\right)=\alpha_j x^{t_j}y^j$ for $j\leq u+v$.
Let 
\begin{equation}
\label{mInitialBasis}
\mathcal B_{u+v}=(P_{m_j'}(x,y)S_{m_j''}(x,y), 0 \leq j \leq u+v)
\end{equation}
 be a basis of some submodule of $M_{r_1+r_2,u+v+1}$. By Lemma \ref{lBuchbergerSPair} it is a Gr\"obner basis
of this submodule. 
It can be seen that  $\Delta(\mathcal B_{u+v})=\sum_{j=0}^{u+v} t_j \geq \frac{n(r_1+r_2)(r_1+r_2+1)}{2}$. 

Let 
\begin{equation}
\label{mBasisInduction}
\mathcal B_{j}=Reduce(\mathcal B_{j-1},P_{m_j'}S_{m_j''}), j> u+v.
\end{equation}
The  $Reduce$ algorithm attempts to cancel the leading terms of the provided polynomials, so $\Delta(\mathcal B_{j+1})\leq \Delta(\mathcal B_j)$.
As soon as one obtains $\Delta(\mathcal B_j)=\frac{n(r_1+r_2)(r_1+r_2+1)}{2}$,  $\mathcal B_j$ is a Gr\"obner basis of $I_{r_1+r_2}$.
\begin{lemma}
\label{lModuleProduct}
$M_{r_1+r_2,u+v+1}$ is generated by $\mathcal B_{(u+1)(v+1)-1}$.
\end{lemma}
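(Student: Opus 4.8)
I would prove Lemma~\ref{lModuleProduct} by identifying, first, the $\F[x]$-module generated by $\mathcal B_{(u+1)(v+1)-1}$ with the module of all $\F[x]$-combinations of pairwise products, and then that module with $M_{r_1+r_2,u+v+1}$. For the first step: $\mathcal B_{u+v}$ in \eqref{mInitialBasis} is the list of products $P_{m_j'}(x,y)S_{m_j''}(x,y)$, $0\le j\le u+v$, and each later step \eqref{mBasisInduction} calls $Reduce$, which by Lemma~\ref{lReduceGroebner} returns a Gr\"obner basis of the module obtained from the current one by adjoining the single new generator $\F[x]\cdot P_{m_j'}(x,y)S_{m_j''}(x,y)$ (along the way one checks that the hypotheses of Lemma~\ref{lReduceGroebner}, namely distinct leading $y$-degrees and $(0,1)$-weighted degree below the basis size, are preserved by every call). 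Since the pairs $(m_j',m_j'')$, $0\le j\le (u+1)(v+1)-1$, exhaust $\{0,\dots,u\}\times\{0,\dots,v\}$, this shows $\mathcal B_{(u+1)(v+1)-1}$ is a Gr\"obner basis of $N:=[P_i(x,y)S_j(x,y),0\le i\le u,0\le j\le v]$, so the lemma reduces to the identity $N=M_{r_1+r_2,u+v+1}$. The inclusion $N\subseteq M_{r_1+r_2,u+v+1}$ is immediate: each $P_i(x,y)S_j(x,y)$ lies in $I_{r_1}\cdot I_{r_2}=I_{r_1+r_2}$ by the ideal--product lemma, and $\Wdeg_{(0,1)}\!\big(P_i(x,y)S_j(x,y)\big)=\Wdeg_{(0,1)}P_i(x,y)+\Wdeg_{(0,1)}S_j(x,y)\le u+v$ because $\F[x][y]$ is an integral domain, and both properties are preserved under $\F[x]$-linear combinations.

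For the reverse inclusion I would use the explicit generators. Because the $P_i$ and $S_j$ satisfy the hypotheses of Lemma~\ref{lLTSum}, one has $u=r_1$ and $v=r_2$, so $M_{r_1+r_2,u+v+1}=M_{r_1+r_2,r_1+r_2+1}$ has the $\F[x]$-basis $\mathcal L=(\Pi_{r_1+r_2,0}(x,y),\dots,\Pi_{r_1+r_2,r_1+r_2}(x,y))$ of \eqref{mMultipleRootIdeal}, the extra generators $y^{j'}\Pi_{r,r}$ of \eqref{mMultipleRootIdeal1} being absent since here $\rho-1=r_1+r_2$. For $0\le j\le r_1+r_2$, setting $j_1=\min(j,r_1)$ and $j_2=j-j_1\le r_2$ gives
$$\Pi_{r_1+r_2,j}(x,y)=(y-T(x))^{j_1}\phi^{r_1-j_1}(x)\cdot(y-T(x))^{j_2}\phi^{r_2-j_2}(x)=\Pi_{r_1,j_1}(x,y)\,\Pi_{r_2,j_2}(x,y),$$
a product of a generator of the $\mathcal L$-basis of $M_{r_1,u+1}$ with one of $M_{r_2,v+1}$. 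Hence it is enough to prove $[P_0(x,y),\dots,P_u(x,y)]=M_{r_1,u+1}$ and, symmetrically, $[S_0(x,y),\dots,S_v(x,y)]=M_{r_2,v+1}$: then each $\Pi_{r_1,j_1}\Pi_{r_2,j_2}$ is a product of an element of $[P_i]$ with one of $[S_j]$, hence lies in $N$, so $\mathcal L\subseteq N$ and $M_{r_1+r_2,u+v+1}\subseteq N$, finishing the proof.

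This sub-claim is the heart of the matter, and I would establish it exactly in the style of the proof of Lemma~\ref{lLTSum}. We have $[P_0,\dots,P_u]\subseteq M_{r_1,u+1}$, and by Lemma~\ref{lBuchbergerSPair} the tuple $(P_0,\dots,P_u)$ is a Gr\"obner basis of $[P_0,\dots,P_u]$ with leading $y$-degrees $0,\dots,u$ and $\Delta([P_0,\dots,P_u])=\sum_{i=0}^u t_i=\frac{nr_1(r_1+1)}{2}$; a Gr\"obner basis of $M_{r_1,u+1}$ (for instance the one produced by IIA) likewise has leading $y$-degrees $0,\dots,u$ and, by Lemma~\ref{lLTSum1}, the invariant $\frac{nr_1(r_1+1)}{2}$. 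The containment then forces the leading-term $x$-exponent in each of the $u+1$ columns of $[P_0,\dots,P_u]$ to be at least that of $M_{r_1,u+1}$, and equality of the column sums forces equality column by column; so the two modules have the same leading-term module, and being one inside the other they coincide. The same argument applies to the $S_j$. The main obstacle, then, is precisely this $\Delta$/staircase comparison: everything else is degree bookkeeping and direct appeals to the quoted lemmas, but the reverse inclusion genuinely relies on the fact that the hypotheses of Lemma~\ref{lLTSum} pin down the leading terms of the $P_i$ and $S_j$ tightly enough to recover the whole bounded-$y$-degree module and not merely the ideal it generates.
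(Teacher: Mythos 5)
Your overall strategy is genuinely different from the paper's. The paper takes an arbitrary $Q\in M_{r_1+r_2,u+v+1}$, expands it as an $\F[x,y]$-combination of the products $P_jS_i$, and then applies the multivariate division algorithm twice (once against the $S_i$ inside $I_{r_2}$, once against the $P_j$ inside $I_{r_1}$) to collapse the bivariate coefficients to univariate ones, with the leftover term annihilated by a $y$-degree comparison. You instead reduce the claim to the two module identities $[P_0,\dots,P_u]=M_{r_1,u+1}$ and $[S_0,\dots,S_v]=M_{r_2,v+1}$, prove those by the $\Delta$-invariant staircase comparison, and then push through the explicit Lee--O'Sullivan generators. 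Both routes are legitimate, and your staircase argument for $[P_i]=M_{r_1,u+1}$ is a nice self-contained justification of a fact the paper's proof uses without comment.

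However, there is one concrete error. You assert that the hypotheses of Lemma~\ref{lLTSum} force $u=r_1$ and $v=r_2$, and you lean on this to conclude that $M_{r_1+r_2,u+v+1}=M_{r_1+r_2,r_1+r_2+1}$ has the generating set $\mathcal L=(\Pi_{r_1+r_2,0},\dots,\Pi_{r_1+r_2,r_1+r_2})$ alone, ``the extra generators $y^{j'}\Pi_{r,r}$ of \eqref{mMultipleRootIdeal1} being absent since here $\rho-1=r_1+r_2$.'' This is not true. Lemma~\ref{lLTSum} only says that $t_m=0$, i.e.\ the last basis polynomial has leading term $y^m$ for \emph{some} $m$; it does not say $m$ equals the root multiplicity. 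Indeed, the paper's own complexity analysis states that the number of polynomials in the Gr\"obner basis of $I_{\tilde r}$ is $O(\tilde r\sqrt{n/k})$, which for $n>k$ is strictly larger than $\tilde r+1$. (Concretely, for $r=1$ the REPEAT loop of $Interpolate$ stops at $j\approx\sqrt{2n/(k-1)}$, not at $j=1$.) So in general $u>r_1$, $v>r_2$, $u+v+1>r_1+r_2+1$, and the module $M_{r_1+r_2,u+v+1}$ has additional generators $y^{j'}\Pi_{r_1+r_2,\,r_1+r_2}$ for $1\le j'\le u+v-r_1-r_2$ which your argument as written does not account for.

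The gap is repairable within your framework. For those extra generators write $y^{j'}\Pi_{r_1+r_2,\,r_1+r_2}=\bigl(y^{a}\Pi_{r_1,r_1}\bigr)\cdot\bigl(y^{b}\Pi_{r_2,r_2}\bigr)$ with $a+b=j'$, $0\le a\le u-r_1$, $0\le b\le v-r_2$; such a split exists precisely because $j'\le (u-r_1)+(v-r_2)$. Then $y^a\Pi_{r_1,r_1}$ has $y$-degree $r_1+a\le u$ and lies in $I_{r_1}$, hence in $M_{r_1,u+1}=[P_0,\dots,P_u]$ by your sub-claim, and likewise $y^b\Pi_{r_2,r_2}\in[S_0,\dots,S_v]$; the product therefore lies in $N$. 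With that addition, and with the observation that the remaining $\Pi_{r_1+r_2,j}$ ($0\le j\le r_1+r_2$) factor as $\Pi_{r_1,j_1}\Pi_{r_2,j_2}$ as you wrote, your proof goes through. But as submitted, the assertion $u=r_1$, $v=r_2$ is false and the missing generators constitute a real gap.
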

\begin{proof}
Consider $Q(x,y)\in I_{r_1+r_2}$, such that $\Wdeg_{(0,1)} Q(x,y)\leq 
u+v$. Any such polynomial can be represented as $Q(x,y)=\displaystyle\sum_{j=0}^uP_j(x,y)\sum_{i=0}^v 
q_{ji}(x,y)S_i(x,y)$. Inner sum is an element of $I_{r_2}$. Since the polynomials $S_i(x,y), 0 \leq i \leq v,$ are
a Gr\"obner basis of  $M_{r_2,v+1}$ and $I_{r_2}$, one can
 use the multivariate polynomial division algorithm to obtain $Q(x,y)=\displaystyle \sum_{j=0}^uP_j(x,y)\left(\sum_{i=0}^{v-1} w_{ji}(x)S_i(x,y)+S_v(x,y)\sum_{i\geq v}y^{i-v}\tilde w_{ji}(x)\right)$. Similarly, $P(x,y)=\displaystyle \sum_{j=0}^u P_j(x,y) \sum_{i\geq v}y^{i-v}\tilde w_{ji}(x)$ is 
in $I_{r_1}$, and the multivariate
division algorithm leads to $P(x,y)=\displaystyle \sum_{j=0}^u w_{jv}(x)P_j(x,y)+P_u(x,y)\sum_{j>u} y^{j-u}w_{jv}(x)$. Hence, 
$Q(x,y)=\displaystyle \sum_{j=0}^u\sum_{i=0}^{v} P_j(x,y)S_i(x,y)w_{ji}(x)+P_u(x,y)S_v(x,y)y\overline w(x,y)$.
Last term in this expression is zero, since $Q(x,y)$ does not contain 
any monomials $ax^py^q$ with $q>u+v$, so 
$M_{r_1+r_2,u+v+1}=[P_j(x,y)S_i(x,y), 0 \leq j \leq u,0 \leq i \leq v]$.
\end{proof}
The lemma states that for any suitable polynomial $Q(x,y)\in I_{r_1}I_{r_2}$ one can replace the 
bivariate polynomials $q_{ji}(x,y)$ with univariate ones $w_ {ji}(x)$. This implies that the 
sequence $\mathcal B_j$ converges eventually to the required module basis. However, the convergence turns out
to be quite slow. One may need to compute many bivariate polynomial products $P_{m_j'}S_{m_j''}$ and apply $Reduce$ algorithm to them before
the constraint \eqref{mLTSum} is satisfied.
In many cases it appears even that $\mathcal B_{j+1}=\mathcal B_j$. That is, a significant fraction  of  pairs $(m_j',m_j'')$ is useless.

Therefore we propose to replace pairwise products $P_j(x,y)S_i(x,y)$ in \eqref{mIdealProduct} with their random linear combinations 
\begin{equation}
\label{mRandomLinearComb}
Q_s(x,y)=\sum_{j=0}^u \sum_{i=0}^v \chi_{sji} P_{j}(x,y)S_{i}(x,y),
\end{equation}
 where $\chi_{sji}$ are independent random variables uniformly
distributed over $\F$. Obviously, such polynomials still generate the ideal product if the linear transformation given by $\chi_{sji}$ 
is invertible, i.e. if at least $(u+1)(v+1)$ polynomials $Q_s(x,y)$ are given. However, it turns out that in
 average one needs just a few such polynomials to obtain a basis of the ideal product. The reason is that $Q_s(x,y)$ depend on all pairwise products $P_j(x,y)S_i(x,y)$,
and a Gr\"obner basis construction algorithm (e.g. $Reduce$) can take them into account simultaneously.
This will be discussed in more details in Section \ref{sComplexity}.

However, it is impractical to construct the polynomials explicitly as given by \eqref{mRandomLinearComb}, since this requires one first to compute all pairwise 
products $P_{j}(x,y)S_{i}(x,y)$. More efficient way is to construct a sequence of bases
$${ \mathcal B_{j+1}'}=Reduce\left( \mathcal  B_j',\left(\sum_{i=0}^u\alpha_{ij}P_i(x,y)\right)\left(\sum_{i=0}^v\beta_{ij}S_i(x,y)\right)\right),$$
where $j\geq u+v$, and  $\alpha_{ij},\beta_{ij}$ are some random  values uniformly distributed over $\F$. Furthermore, we propose 
to construct the initial basis  $\mathcal  B_{u+v}'=(Q_0(x,y),\ldots,Q_{u+v}(x,y))$ as $Q_i(x,y)=P_{i-j}(x,y)S_j(x,y)$, where for each $i$
$j$ is selected so that $\lt Q_i(x,y)=a_ix^{t_i}y^i$, and the  values $t_i,0 \leq i \leq u+v$ are minimized. This reduces the number of iterations needed by the $Reduce$ algorithm.
 The proposed approach is summarized in Figure \ref{fFastModMul}. 
\begin{figure}
\begin{algorithm}{Merge}{(P_0(x,y),\ldots,P_u(x,y)),(S_0(x,y),\ldots,S_v(x,y)),\Delta_0}
\begin{FOR}{i\=0 \TO u+v}
Q_i(x,y)=\min_{0\leq j\leq v}P_{i-j}(x,y)S_j(x,y)
\end{FOR}\\
\mathcal B=(Q_0(x,y),\ldots,Q_{u+v}(x,y))\\
\begin{WHILE}{\Delta(\mathcal B)>\Delta_0}
\alpha_i\=rand(), 0 \leq i \leq u\\
\beta_j\=rand(),0 \leq j \leq v\\
Q(x,y)\=\left(\sum_{i=0}^u\alpha_iP_i(x,y)\right)\left(\sum_{i=0}^v\beta_iS_i(x,y)\right)\\
\mathcal B\=\CALL {Reduce}(\mathcal B,Q(x,y))
\end{WHILE}\\
\RETURN \mathcal B
\end{algorithm}
\caption{Construction of a Gr\"obner basis of $I=JK$ from the Gr\"obner bases of $J=\left<P_0(x,y),\ldots,P_u(x,y)\right>$ and
 $K=\left<S_0(x,y),\ldots,S_v(x,y)\right>$.}
\label{fFastModMul}
\end{figure}

\begin{thm}
\label{tMerge}
Given Gr\"obner bases $\mathcal P=(P_0(x,y),\ldots,P_u(x,y))$ and $\mathcal S=(S_0(x,y),\ldots,S_v(x,y))$ of ideals $I_{r_1}$ and $I_{r_2}$,
the result of $Merge(\mathcal P,\mathcal S,n\frac{r(r+1)}{2})$ is a Gr\"obner basis of $I_{r}$, where $r=r_1+r_2$. 
\end{thm}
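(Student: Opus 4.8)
The plan is to show that $Merge$ terminates and that upon termination $\mathcal B$ is a Gr\"obner basis of $I_r$. The structure of the argument has three parts: correctness of the loop invariant (each $\mathcal B$ produced is a Gr\"obner basis of a submodule of $M_{r,u+v+1}$ contained in $I_r$), the fact that the terminating condition $\Delta(\mathcal B)=n\frac{r(r+1)}{2}$ forces $\mathcal B$ to be the sought Gr\"obner basis, and termination itself (with probability $1$, enough random combinations eventually drive $\Delta(\mathcal B)$ down to the target).

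First I would establish the invariant. The initial basis $\mathcal B_{u+v}=(Q_0,\ldots,Q_{u+v})$ consists of products $P_{i-j}S_j$, hence lies in $I_{r_1}I_{r_2}=I_r$ (using the ideal-factorization lemma), and by construction its members have distinct $y$-degrees $0,1,\ldots,u+v$, so Lemma \ref{lBuchbergerSPair} makes it a Gr\"obner basis of the module it generates. Each subsequent call $\mathcal B\leftarrow Reduce(\mathcal B,Q(x,y))$ with $Q(x,y)=\bigl(\sum_i\alpha_iP_i\bigr)\bigl(\sum_i\beta_iS_i\bigr)\in I_r$ preserves, by Lemma \ref{lReduceGroebner}, the property of being a Gr\"obner basis (now of the enlarged module $[\mathcal B,Q]$), and since both the old module and $Q$ are inside $I_r$, so is the new module; also $\Wdeg_{(0,1)}$ stays bounded by $u+v$ because $Reduce$ never raises $y$-degrees. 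Since the leading-term $y$-degrees remain $0,\ldots,u+v$ while the module only grows, each reduction can only shrink the leading-term $x$-degrees, so $\Delta(\mathcal B)$ is non-increasing; combined with the obvious lower bound $\Delta(\mathcal B_{u+v})\geq n\frac{r(r+1)}{2}$ noted before \eqref{mBasisInduction} and the fact (Lemma \ref{lLTSum1}, applied to $M_{r,u+v+1}\supseteq M_{r,\rho}$ more precisely to the true module) that no submodule of $M_{r,u+v+1}$ can have $\Delta$ below $n\frac{r(r+1)}{2}$ unless it equals $M_{r,u+v+1}$, the target value is exactly the minimum.

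Next, the key step: when $\Delta(\mathcal B)=n\frac{r(r+1)}{2}$, the basis $\mathcal B$ satisfies all hypotheses of Lemma \ref{lLTSum} with $s=r$ and $m=u+v$ — distinct $y$-degrees of leading terms, $\Wdeg_{(0,1)}\leq u+v$, $t_{u+v}=0$ (because $y^{u+v}\Pi_{r,r}$-type generator has no $x$ in its leading term, which I would note survives through $Reduce$), and the $\Delta$ equality is precisely \eqref{mLTSum}. Hence Lemma \ref{lLTSum} gives $\langle \mathcal B\rangle=I_r$ with $\mathcal B$ a Gr\"obner basis of $I_r$, which is the conclusion. To see the loop actually reaches this state, I invoke Lemma \ref{lModuleProduct}: the full set of pairwise products generates $M_{r,u+v+1}$ as a module over $\F[x]$, so running $Reduce$ against all $(u+1)(v+1)$ products $P_iS_j$ would certainly yield a Gr\"obner basis of $M_{r,u+v+1}$, whose $\Delta$ is then forced down to $n\frac{r(r+1)}{2}$ by Lemma \ref{lLTSum1}. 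Each random combination $Q(x,y)$ lies in the $\F$-span of those products, and a generic (invertibility) argument shows that after at most $(u+1)(v+1)$ steps the $\mathcal B$'s module contains the span of all products, hence equals $M_{r,u+v+1}$; thus the loop terminates (deterministically in the worst case, much faster on average, as Section \ref{sComplexity} discusses).

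The main obstacle I anticipate is the passage from ``$\Delta(\mathcal B)$ has reached its minimum'' to ``$\langle\mathcal B\rangle=I_r$'': one must be careful that $\mathcal B$ being a Gr\"obner basis of a \emph{module} with the right $\Delta$ genuinely forces it to be a Gr\"obner basis of the \emph{ideal} $I_r$, and that the module it generates is all of $M_{r,u+v+1}$ rather than some proper submodule with coincidentally small $\Delta$ — this is exactly what Lemma \ref{lLTSum} packages, so the work is in verifying its hypotheses, in particular $t_{u+v}=0$ and the $\Wdeg_{(0,1)}$ bound, and in confirming via Lemma \ref{lModuleProduct} that the random combinations span enough of $M_{r,u+v+1}$ that the minimum is attainable at all. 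A secondary subtlety is justifying that $u+v$ is large enough, i.e. $u+v+1>\rho$ or at least that $M_{r,u+v+1}$ contains a generating set of $I_r$ as an ideal; since $u=v=r_1=r_2$ type splits give $u+v=r$ and $\Pi_{r,j}$ ranges $0\le j\le r$, this holds, but it should be stated explicitly.
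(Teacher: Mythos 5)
Your argument is correct and follows essentially the same route as the paper's: Lemma~\ref{lReduceGroebner} ensures each $\mathcal B$ produced is a Gr\"obner basis of some module, the random products eventually make that module equal to $M_{r,u+v+1}$ (Lemma~\ref{lModuleProduct} underpins this), and Lemma~\ref{lLTSum} converts the terminating condition $\Delta(\mathcal B)=n\tfrac{r(r+1)}{2}$ into the conclusion that $\mathcal B$ is a Gr\"obner basis of the ideal $I_r$. You supply considerably more scaffolding than the paper's very terse proof --- in particular the explicit loop invariant and the verification of the $t_{u+v}=0$ hypothesis of Lemma~\ref{lLTSum} --- which is worthwhile; the one small slip is the claim of deterministic termination ``after at most $(u+1)(v+1)$ steps'': each added $Q(x,y)=\bigl(\sum_i\alpha_iP_i\bigr)\bigl(\sum_j\beta_jS_j\bigr)$ is a rank-one combination of the pairwise products, so an unlucky random draw need not enlarge the module, and termination is only with probability one (as you correctly frame it at the outset; the paper's ``it is possible to select'' phrasing glosses over the same point).
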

\begin{proof}
Observe that the sequence $B_j'$ still converges to a basis of $M_{r_1+r_2,u+v+1}$, since 
it is possible to  select  $\alpha_{ij}$ and $\beta_{ij}$ so that the linear transformation \eqref{mRandomLinearComb} given by $\chi_{sij}=\alpha_{is}\beta_{js}$ 
is invertible, provided that   sufficiently many polynomials are constructed. By 
lemma \ref{lReduceGroebner}, the $Reduce$ algorithm always produces a Gr\"obner basis of some module. By lemma \ref{lLTSum},
this basis is a Gr\"obner basis of $I_r$.
\end{proof}
\begin{rem}
$Merge$ is not guaranteed to obtain a minimal Gr\"obner basis of $I_r$. In particular, a few 
 polynomials may have $\lt B_j(x,y)=y^j$. Such polynomials are redundant, and should be eliminated, except the smallest one. 
\end{rem}

The overall interpolation algorithm is shown in Figure \ref{fInterpolationAlg}. $(1,k-1)$-weighted degree lexicographic ordering must
be used throughout this algorithm.
\begin{figure}
\begin{algorithm}{Interpolate}{((x_i,y_i),1 \leq i \leq n),r}
\phi(x)\=\prod_{i=1}^n (x-x_i)\\
T(x)\=\sum_{i=1}^n y_i\frac{\prod_{j\neq i}(x-x_j)}{\prod_{j\neq i}(x_i-x_j)}\\
\mathcal G\=(\phi(x))\\
j\=0\\
\begin{REPEAT}
\mathcal \mathcal G\=\CALL {Reduce}(\mathcal G,y^j(y-T(x)))\\
j\=j+1
\end{REPEAT}{\lt {\mathcal G}_j=  y^j}\\
\mathcal B\=\mathcal G\\
\text{Let $r=\sum_{j=0}^m r_j2^j, r_j\in \{0,1\}$}\\
R\=1\\
\begin{FOR}{j\=m-1 \TO 0 }
R\=2R\\
\mathcal B\=\CALL {Merge}(\mathcal B,\mathcal B,nR(R+1)/2)\\
\begin{IF}{r_j=1}
R\=R+1\\
\mathcal B\=\CALL {Merge}(\mathcal B,\mathcal G,nR(R+1)/2)
\end{IF}
\end{FOR}\\
\RETURN \mathcal B
\end{algorithm}
\caption{Construction of a Gr\"obner basis for $I_r$}
\label{fInterpolationAlg}
\end{figure}
Observe that in most practical cases the polynomial $T(x)$ can be constructed by using fast inverse discrete Fourier transform. FFT
can be also used in the implementation of polynomial multiplication, which is extensively used by this algorithm. 
\begin{thm}
\label{tInterpolate}
$Interpolate$  algorithm constructs a Gr\"obner basis of $I_r$ with respect to a given term
ordering.
\end{thm}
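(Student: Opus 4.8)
The plan is to argue by induction on the binary expansion of $r$, showing that the variable $\mathcal B$ in $Interpolate$ is always a Gr\"obner basis of $I_R$ for the current value of $R$, and that the calls to $Merge$ preserve this invariant. First I would establish the base case: the initial $\mathcal G$ is built by repeatedly calling $Reduce$ starting from $(\phi(x))$ and adjoining $y^j(y-T(x))$. Since $\phi(x)=\Pi_{1,0}(x,y)$, $y-T(x)=\Pi_{1,1}(x,y)$, and $y^j(y-T(x))=\Pi_{1,1+j}(x,y)$, the sequence of $Reduce$ calls is exactly the construction \eqref{mGBExtension} specialized to $r=1$, so by Lemma \ref{lReduceGroebner} applied iteratively (checking each time that the hypotheses $\lt S_j(x)=\alpha_j x^{t_j}y^j$ and $\Wdeg_{(0,1)}S_j(x,y)<i$ hold, which they do because a Gr\"obner basis of a module whose elements have $y$-degrees $0,1,\dots,i-1$ must have one generator of each such $y$-degree) $\mathcal G$ becomes a Gr\"obner basis of $M_{1,\rho}$, and the termination condition $\lt{\mathcal G}_j=y^j$ guarantees that at that point $\Delta(\mathcal G)=nr(r+1)/2$ with $r=1$ (since $t_{\ydeg}=0$ for the last generator), hence by Lemma \ref{lLTSum} $\mathcal G$ is a Gr\"obner basis of $I_1$.

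Next I would handle the inductive step. Assume before a given iteration of the $FOR$ loop that $\mathcal B$ is a Gr\"obner basis of $I_R$ satisfying the hypotheses of Lemma \ref{lLTSum} (in particular $\Delta(\mathcal B)=nR(R+1)/2$ and $t_m=0$ for the top generator). After $R\leftarrow 2R$, the call $Merge(\mathcal B,\mathcal B,nR(R+1)/2)$ is, by Theorem \ref{tMerge} with $r_1=r_2=R_{\mathrm{old}}$, a Gr\"obner basis of $I_{2R_{\mathrm{old}}}=I_R$ — here I need to note that Theorem \ref{tMerge} is stated for ideals $I_{r_1},I_{r_2}$ given by Gr\"obner bases of the form guaranteed by Lemma \ref{lLTSum}, which is exactly the inductive hypothesis, and that the output again has this form, so the invariant is restored. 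If additionally $r_j=1$, the subsequent $Merge(\mathcal B,\mathcal G,nR(R+1)/2)$ with $r_1=R-1$, $r_2=1$ yields a Gr\"obner basis of $I_R$ for the incremented $R$, again by Theorem \ref{tMerge}, using that $\mathcal G$ remained a Gr\"obner basis of $I_1$ of the required form (it is never modified after its construction). Tracking $R$ through the loop, after processing bit $r_j$ one has $R=\sum_{i=j}^m r_i 2^{i-j}$, so after the final iteration $R=\sum_{i=0}^m r_i 2^i=r$, and $\mathcal B$ is a Gr\"obner basis of $I_r$.

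The main obstacle I anticipate is verifying that the hypotheses required by the invoked lemmas are genuinely maintained at every call — specifically that each $Merge$ output satisfies the structural conditions of Lemma \ref{lLTSum} ($\lt P_j=a_jx^{t_j}y^j$ for consecutive $y$-degrees $0,\dots,m$, bounded $(0,1)$-weighted degree, $t_m=0$, and the critical degree-sum \eqref{mLTSum}) so that it can serve as a valid input to the next $Merge$. The degree-sum condition is the delicate one: $Merge$ terminates precisely when $\Delta(\mathcal B)=\Delta_0=nR(R+1)/2$, and one must argue this target is the correct value $n r'(r'+1)/2$ for the ideal $I_{r'}$ being produced, and that $t_m=0$ holds at termination (which follows because if $\Delta$ has reached its minimum possible value, the top generator — of $y$-degree equal to the total dimension minus one — must have $x$-degree zero, else a smaller basis would exist, contradicting Lemma \ref{lLTSum1} applied to the corresponding module). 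The $Remark$ after Theorem \ref{tMerge} about non-minimality (spurious generators with $\lt B_j=y^j$) needs a brief comment: such redundant generators do not affect Gr\"obner-basis-ness, and the one retained still has $x$-degree $0$, so the argument goes through; alternatively one observes the statement only claims a Gr\"obner basis, not a minimal one, so this is harmless. Apart from this bookkeeping, the proof is a routine assembly of Lemma \ref{lReduceGroebner}, Lemma \ref{lLTSum}, and Theorem \ref{tMerge}.
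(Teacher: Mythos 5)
Your proposal is correct and follows essentially the same route as the paper: establish that the REPEAT loop produces a Gr\"obner basis of $I_1$, then induct on the binary expansion of $r$, invoking Theorem~\ref{tMerge} at the doubling (line~14) and increment (line~17) steps. The paper's own proof is terser — it cites \cite[Th.~6.54]{becker93grobner} for loop termination and does not spell out the hypothesis-checking for Lemma~\ref{lLTSum} that you carefully supply — but the underlying argument is the same.
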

\begin{proof}
The objective of the  $REPEAT$ loop is to construct a Gr\"obner basis of $M_{1,j+1}$, such that it is also a Gr\"obner basis of $I_1$.
Any Gr\"obner basis of a zero-dimensional 
ideal must contain a polynomial $Q(x,y):\lt Q(x,y)=y^j$ for some $j$ \cite[Th. 6.54]{becker93grobner},
 so this loop terminates eventually, and  $\mathcal G$ is indeed a Gr\"obner basis of $I_1$.

Let $r'=\sum_{i=j+1}^m r_i2^{i-j-1}$. By induction, the input vectors to $Merge$  at line 14   are two copies
of a  Gr\"obner basis of $I_{r'}$. 
By Theorem \ref{tMerge} its output  is a Gr\"obner basis of $I_{2r'}$.
   Similar argument applies to line 17. Hence, at the end of each iteration of the $FOR$ loop one 
obtains a Gr\"obner basis of $I_{2r'+r_j}$. Observe also, that at the end of each iteration $R=2r'+r_j$.
\end{proof}
The interpolation polynomial needed by the Guruswami-Sudan algorithm can be found as the smallest element of the basis
produced by the $Interpolate$ algorithm.
\subsection{Complexity analysis}
\label{sComplexity}

Let us first estimate the convergence speed of the $Merge$ algorithm. Recall, that this algorithm constructs a Gr\"obner basis of $M_{r_1+r_2,u+v+1}$ given Gr\"obner bases
of $M_{r_1,u+1}$ and $M_{r_2,v+1}$ (in fact, $I_{r_1}$ and $I_{r_2}$). For the sake of simplicity we will estimate the probability of $M_{r_1+r_2,u+v+1}$ being
generated by $u+v+1+\delta$ polynomials given by \eqref{mRandomLinearComb}, 
 such that $\lt Q_s(x,y)=\gamma_s x^{t_s}y^s$ for $s\leq u+v$, and leading terms of summands do not cancel. The difference in the behavior of the actual algorithm $Merge$
with respect to this impractical ideal multiplication method will be discussed below.

The polynomials $Q_0(x,y),\ldots,Q_{u+v+\delta}(x,y)$ can be represented  as a $(u+v+1)\times (u+v+1+\delta)$ polynomial matrix $\mathcal Q(x)$.
If they indeed generate $M_{r_1+r_2,u+v+1}$,  then the  $(u+v+1)\times (u+v+1)$ 
polynomial matrix  $\mathcal B(x)$  corresponding  to the Gr\"obner basis $B_0(x,y),\ldots,B_{u+v}(x,y)$ of this module as constructed  by IIA satisfies 
$$\mathcal Q(x)\mathcal A(x)=\mathcal B(x),$$
 where  $\mathcal Q(x)$  is the polynomial matrix corresponding to $Q_i(x,y)$, and  $\mathcal A(x)$ is some transformation matrix. 
On the other hand, $Q_j(x,y)\in M_{r_1+r_2,u+v+1}$, i.e. $\mathcal Q(x)=\mathcal B(x)\Lambda(x)$, where
the elements of $\Lambda(x)$ matrix can be obtained by the multivariate division algorithm.
 Hence, 
\begin{equation}
\label{mBasisSystem}
\mathcal B(x)\Lambda(x)\mathcal A(x)=\mathcal B(x).
\end{equation}
 Since  $\lt B_j(x,y)=x^{t_j}y^j,0 \leq j \leq u+v$, the polynomials $B_j(x,y)$ are linearly independent over $\F[x]$,
and $\mathcal B(x)$ is invertible over the field of rational functions, so it can be cancelled in \eqref{mBasisSystem}.
Therefore, the problem reduces to estimating the probability of existence of a polynomial
matrix $\mathcal A(x)$ satisfying 
\begin{equation}
\label{mPseudoInverse}
\Lambda(x)\mathcal A(x)=I.
\end{equation}
This is a system of linear equations in terms of $\mathcal A(x)$. Observe that $\Lambda(x)$ is a full-rank matrix over $\F[x]$.
Polynomial 
solution exists if and only if the scalar matrix equations
\begin{equation}
\label{mBasisSystemEval}
 \Lambda(x_w) \mathcal A(x_w)=I
\end{equation} 
are solvable for any $x_w\in \F^*$, i.e. $\Lambda(x_w)$
 matrices have rank $u+v+1$.
It is sufficient to consider only such $x_w$ that some fixed $(u+v+1)\times (u+v+1)$
submatrix $\widehat \Lambda(x)$  of $\Lambda(x)$ looses rank for $x=x_w$, i.e. the 
roots of $\det \widehat \Lambda(x)$.  Such roots are called eigenvalues of polynomial matrix $\widehat \Lambda(x)$ \cite{gohberg2009matrix}.

Let $\widehat\Lambda(x)$ be a matrix consisting of first $u+v+1$ columns
of $\Lambda(x)$. This matrix satisfies $$\mathcal B(x)\widehat \Lambda(x)=\widehat {\mathcal Q}(x),$$
where the polynomial matrix $\widehat{\mathcal Q}(x)$ corresponds to $Q_0(x,y),\ldots,Q_{u+v}(x,y)$. 
For each eigenvalue $x_w\in \F^*$  of $\widehat \Lambda(x)$ one can   identify $n_w$ linearly independent left 
eigenvectors, i.e. vectors $z^{(w,1)},\ldots,z^{(w,n_w)}$, such that $z^{(w,l)}\widehat \Lambda(x_w)=0, 1 \leq l \leq n_w$. The geometric multiplicity $n_w$
of eigenvalue $x_w$ is upper-bounded by its algebraic multiplicity $r_w$, which is defined as the multiplicity of root $x_w$ of $\det \widehat \Lambda(x)$.
Equation \eqref{mBasisSystemEval} is solvable if for  each $l$    $\sum_{j=0}^{u+v}z_j^{(w,l)}\lambda_{ji}(x_w) \neq 0$ for at least 
one $i:u+v<i\leq u+v+\delta$, i.e. if $\Lambda(x_w)$ is a full-rank matrix. The total number of such
pairs $(x_w,z^{(w,l)})$ is upper-bounded by  $N=\deg \det \widehat \Lambda(x)=\deg \det \mathcal Q(x)-\deg \det \mathcal B(x)$. 
The polynomials $Q_i(x,y),0 \leq i \leq u+v$ represent a Gr\"obner basis of some submodule of $M_{r_1+r_2,u+v+1}$, and could  be obtained from those given by $\mathcal B(x)$ 
by executing lines 6--10 of IIA for a few additional points $(x_i,y_i)$ and/or pairs $(\alpha,\beta)$. Hence, by lemma \ref{lDegDet},
  $\Delta_1=\deg \det \mathcal Q(x)=\Delta((Q_0(x,y),\ldots,Q_{u+v}(x,y)))$ and $\Delta_0=\deg \det \mathcal B(x)=n\frac{(r_1+r_2)(r_1+r_2+1)}{2}$.

Let polynomials $R_i(x,y),0 \leq i \leq w$ be a Gr\"obner basis with respect
to $(1,k-1)$-weighted degree lexicographic ordering of $I_R$ and $M_{R,w+1}$ for some $R$ and $w$. Then $\lt R_i(x,y)=x^{r_i}y^i, 0 \leq i \leq w$, where\footnote{There 
is no formal proof for this approximation. However, one can argue  that the polynomials in a Gr\"obner basis of $M_{R,w+1}$ should have {\em approximately} 
the same $(1,k-1)$-weighted degree, since the IIA, which can be used to construct them, always increases the degree of the smallest polynomial. Numerical
 experiments confirm this claim. Alternatively, if the received sequence is not a codeword,
a Gr\"obner basis of zero-dimensional ideal $I_R$ must contain the polynomials with $(1,k-1)$-weighted degree both
 below and above the value given by \eqref{mWDegBound}, and the approximate expression for $l_R$ derived below coincides with that one. }  $r_i\approx l_R-i(k-1), 0 \leq i \leq w-1$ for some $l_R$,
 $r_w=0$, and $\sum_{i=0}^w r_i=n\frac{R(R+1)}{2}$. Hence 
$wl_R -(k-1)\frac{w(w-1)}{2}\approx n\frac{R(R+1)}{2}$, and $l_R=l_R(w)\approx \frac{(k-1)w(w-1)+nR(R+1)}{2w}$.  

Since the polynomials $P_i(x,y)$ and $S_j(x,y)$ represent Gr\"obner bases of $M_{r_1,u+1}$ and $M_{r_2,v+1}$, $\lt P_i(x,y)=x^{p_i}y_i$, $\lt S_j(x,y)=x^{s_j}y^j$, where
$p_i\approx l_{r_1}(u)-i(k-1)$ and $s_j\approx l_{r_2}(v)-j(k-1)$. Then $t_i\approx l_{r_1}(u)+l_{r_2}(v)-i(k-1), 0 \leq i \leq u+v$, $t_{u+v}=0$ and 
$N=\Delta_1-\Delta_0=\sum_{i=0}^{u+v}t_i-n\frac{(r_1+r_2)(r_1+r_2+1)}{2}\approx (l_{r_1}(u)+l_{r_2}(v))(u+v)-(k-1)\frac{(u+v)(u+v-1)}{2}-n\frac{(r_1+r_2)(r_1+r_2+1)}{2}$
Hence,
\begin{equation}
\label{mDetDefect}
N\approx -(k-1)\frac{u+v}{2}+\frac{n}{2uv}\left((vr_1-ur_2)^2+v^2r_1+u^2r_2\right).
\end{equation}

Let us assume that the elements of $\Lambda(x)$ are univariate polynomials with independent coefficients uniformly distributed over $\F$. Then 
$\overline \lambda_{iwl}=\sum_{j=0}^{u+v}z_j^{(w,l)}\lambda_{ji}(x_w)$ is a random variable uniformly distributed over $\F^s$, where $\F^s$ is the smallest algebraic extension of $\F$,
such that $x_w\in \F^s$, and $s$ is the extension degree. 
Then the probability of $\overline \lambda_{iwl}$ being non-zero for at least one $i\in \{u+v+1,\ldots,u+v+\delta\}$ is given by $\theta_{s\delta}=1-\frac{1}{|F|^{s\delta}}.$

Consider factorization $\det \Lambda(x)=\alpha\prod_i \phi_i(x)$, where $\alpha\in \F\setminus\{0\}$, and $\phi_i(x)\in \F[x]$ are some monic irreducible polynomials.
Each eigenvalue $x_w$ is a root of at least one of $\phi_i(x)$, so $x_w\in \F^{\sigma_i}, \sigma_i=\deg \phi_i(x)$, and $N=\sum_i \sigma_i$. Let $\omega_j=|\{i|\sigma_i=j\}|, 1 \leq j \leq N$.
Observe that $\phi_i(x)$ has $\sigma_i$ distinct roots in $\F^{\sigma_i}$.
Assuming the worst case, where the geometric and algebraic multiplicities of eigenvalues are the same, one obtains the following expression
for the probability of \eqref{mBasisSystemEval} being solvable for all eigenvalues $x_w$:
\begin{equation}
\label{mSpecConvergProb}
\theta_\delta(\omega)=\prod_{j=1}^N\theta_{j\delta}^{j\omega_j}=\prod_{j=1}^N\left(1-\frac{1}{|\F|^{j\delta}}\right)^{j\omega_j}.
\end{equation}
Assuming that $\det \widehat\Lambda(x)$ is a polynomial with independent coefficients uniformly distributed over $\F$, one
can estimate the probability of obtaining a particular factorization of $\det \Lambda(x)$ as $P_\omega=\frac{1}{|\F|^N}\prod_{j=1}^N{{\mu_j+\omega_j+1}\choose \omega_j}$  \cite{dixon2004degree},
where $\mu_j$ is the number of monic irreducible polynomials of degree $j$. Hence, the probability of \eqref{mPseudoInverse} being solvable is given by 
$$\Theta(\delta)=\sum_\omega \theta_\delta(\omega)P_\omega,$$
where summation is performed over all partitions $\omega$ of $N$.

Exact evaluation of this expression does not seem to be feasible. However, it can be seen that the value of 
\eqref{mSpecConvergProb} is dominated by the first multiple, and it is known that a random polynomial over a finite field $\F$ has in 
average one root in it \cite{leontev2006roots}. Hence, the probability of \eqref{mPseudoInverse} being unsolvable decreases
exponentially fast with $\delta$. Thus, for sufficiently large $\F$ one can assume that  a Gr\"obner basis of $I_{r_1+r_2}$ can be derived from $u+v+1+\delta, \delta=O(1)$ polynomials 
given by \eqref{mRandomLinearComb}. 

The above analysis was performed for an impractical version of the proposed randomized ideal multiplication
method. It turns out that the polynomial matrix corresponding to 
the actual polynomials $Q_{0}(x,y),\ldots,Q_{u+v}(x,y)$ generated on line 2 of the $Merge$ algorithm has usually more
 than one eigenvalue in $\F$ with high algebraic multiplicity. But the geometric multiplicity
of the corresponding eigenvectors appears to be much less than the algebraic one (although still greater than $1$), so the algorithm still quickly converges.

Let us now estimate the number of iterations of $Reduce$ algorithm called on line 8 of $Merge$. To do this observe
that the objective of $Reduce$ is to decrease $(1,k-1)$-weighted degrees of polynomials constructed on lines 2 and 7 of $Merge$
from approximately $l_{r_1}+l_{r_2}$ to approximately $l_{r_1+r_2}$, i.e. to cancel the monomials with too high $(1,k-1)$-weighted degree.
For each polynomial approximately $(l_{r_1}+l_{r_2}-l_{r_1+r_2})(u+v)$ monomials should be eliminated. The total number of monomials to be eliminated can be estimated
as\footnote{Observe that the objective of minimization at line 2 of $Merge$ is to decrease the number
of monomials to be cancelled, i.e. decrease the number of iterations in $Reduce$.} $\sum_{i=0}^{u+v} (l_{r_1}+l_{r_2}-l_{r_1+r_2})(u+v)= \sum_{i=0}^{u+v} (l_{r_1}+l_{r_2}-i(k-1)-(l_{r_1+r_2}-i(k-1)))(u+v)=N(u+v)$. At least one monomial
 is cancelled during each iteration of $Reduce$. Taking into account \eqref{mDetDefect}, one obtains that the number of
iterations in $Reduce$ is given by $O(\tilde r(n-\sqrt{nk}))$, where $\tilde r=r_1+r_2$.   The algorithm operates with polynomials 
containing $O(n(2\tilde r)^2)$ terms, i.e. its complexity is given by $O(n(n-\sqrt{nk})\tilde r^3)$. 

It can be seen from \eqref{mRhoBound} that the number of polynomials in the basis of $I_{\tilde r}, \tilde r\leq r$ is $O(\tilde r\sqrt{n/k})$. 
The degrees of these polynomials
can be estimated as $\Wdeg_{(0,1)}Q_i(x,y)=O(\tilde r\sqrt{n/k})$ and $\Wdeg_{(1,0)}Q_i(x,y)=O(n\tilde r)$. Computing a product of two such polynomials
requires $O(n\tilde r^2\sqrt{n/k}\log(\tilde r\sqrt{n/k})\log(n\tilde r))$ operations. The analysis given above suggests
that the number of iterations performed by $Merge$ is $O(1)$. 
Therefore, the complexity of polynomial multiplications needed to  construct
 the Gr\"obner basis of $I_{2\tilde r}$ from the basis of $I_{\tilde r}$ is $O(\frac{n^2}{k}\tilde r^3\log(\tilde r\sqrt{n/k})\log (n\tilde r))$. 
Hence, one call to $Merge$ at line 11 of the interpolation algorithm requires $O(n\tilde r^3(a\log(\tilde r\sqrt{n/k})\log (n\tilde r)+b(n-\sqrt{nk})))$ operations for some 
positive $a$ and $b$. 

Obviously, the complexity of $Interpolate$ algorithm is dominated by the FOR loop.
The number of calls to  $Merge$ in this loop is given by
\begin{equation}
\label{mNumOfIterations}
M=\lfloor \log_2 r\rfloor +\sum_{i=0}^{\lfloor\log_2 r\rfloor} r_i.
\end{equation}
The second term in this expression corresponds to line 17 of the algorithm. 
The complexity of the whole algorithm is dominated by the
last iteration, so the overall complexity is given by $O(nr^3(a\log(r\sqrt{n/k})\log (nr)+b(n-\sqrt{nk}))$. Observe that this is better than the complexity
of IIA. 

\subsection{Re-encoding}
The proposed binary interpolation algorithm can be integrated with the re-encoding approach \cite{koetter2003complexity, koetter2003efficient,
ma2007complexity}.
As it was shown in section \ref{sBasisChange}, $M_{r,\rho}=[(y-T(x))^j\phi^{r-j}(x),y^s(y-T(x))^r,0 \leq j \leq r,1 \leq s \leq \rho-r]$. Let  $\psi(x)=\prod_{i=1}^k (x-x_i)$. 
Dividing $T(x)$ by  $\psi(x)$, one obtains $$T(x)=h(x)\psi(x)+g(x),$$ 
where $g(x_i)=y_i, 1 \leq i \leq k$ and $h(x_i)=\frac{y_i-g(x_i)}{\psi(x_i)}, i=k+1,\ldots,n$. Substituting $y=g(x)+z\psi(x)$  and dividing\footnote{This operation prevents one from using the concept of ideal here.} all polynomials in $M_{r,\rho}$
by $\psi^r(x)$, one obtains the module 
\begin{eqnarray*}
\widehat M_{r,\rho}&=&\left\{P(x,z)\in \F[x,y]\left|P\left(x_i,\frac{y_i-g(x_i)}{\psi(x_i)}\right)=0^r,\right.\right.\\
&&\left.i=k+1,\ldots,n,\Wdeg_{(0,1)}P(x,z)<\rho\right\},
\end{eqnarray*}
which is generated by
$\widehat \Pi_{r,j}(x,z)=(z-h(x))^j\theta^{r-j}(x),0 \leq j \leq r$ and $\widehat \Pi_{r,r+j}=z^j\psi^j(x)(z-h(x))^r, 1 \leq j \leq \rho-r$, where $\theta(x)=\frac{\phi(x)}{\psi(x)}$. 
There is a one-to-one correspondence between the polynomials in $M_{r,\rho}$ and $\widehat M_{r,\rho}$, and the smallest polynomial 
with respect to $(1,k-1)$-weighted
degree lexicographic  ordering in $M_{r,\rho}$ corresponds to the smallest polynomial with respect to $(1,-1)$-weighted degree lexicographic ordering in $\widehat M_{r,\rho}$. If a polynomial in $M_{r,\rho}$ has leading term $ax^iy^j$, then the corresponding polynomial in $\widehat M_{r,\rho}$ has leading
term $ax^{i+jk-rk}z^j, a\in \F$. This transformation essentially reduces the number of interpolation points. For high-rate codes this significantly decreases
the number of terms in the polynomials, reducing thus the overall algorithm complexity. 

The Gr\"obner basis of $\widehat M_{r,\rho}$ can be again constructed by the $Interpolate$ algorithm after minor modifications, as shown  in Figure 
\ref{fReencodedInterpolation}. $(1,-1)$-weighted degree lexicographic ordering must be used throughout this algorithm. 
\begin{figure}
\begin{algorithm}{ReencodeInterpolate}{((x_i,y_i),1 \leq i \leq n),r,k}
\psi(x)\=\prod_{i=1}^k (x-x_i), \theta(x)=\prod_{i=k+1}^n(x-x_i)\\
T(x)\=\sum_{i=1}^n y_i\frac{\prod_{j\neq i}(x-x_j)}{\prod_{j\neq i}(x_i-x_j)}\\
\text{Compute $h(x): T(x)=h(x)\psi(x)+g(x), \deg g(x)<k$}\\
\mathcal G\=(\theta(x))\\
j\=0\\
\begin{REPEAT}
\,{\mathcal G}\=\CALL {Reduce}(\mathcal G,(\psi(x)z)^j(z-h(x)))\\
j\=j+1
\end{REPEAT}{\lt G_j=  x^{(j-1)k}z^j}\\
\mathcal B\=\mathcal G\\
\text{Let $r=\sum_{j=0}^m r_j2^j, r_j\in \{0,1\}$}\\
R\=1\\
\begin{FOR}{j\=m-1 \TO 0 }
R\=2R\\
\mathcal B\=\CALL {Merge}(\mathcal B,\mathcal B,\delta(|\mathcal B|,|\mathcal B|,R))\\
\begin{IF}{r_j=1}
R\=R+1\\
\mathcal B\=\CALL {Merge}(\mathcal B,\mathcal G,\delta(|\mathcal B|,|\mathcal G|,R))
\end{IF}
\end{FOR}\\
\RETURN \mathcal B
\end{algorithm}
\caption{Construction of a Gr\"obner basis for $\widehat M_{r,\rho}$}
\label{fReencodedInterpolation}
\end{figure}
The algorithm first constructs a Gr\"obner basis of $\widehat M_{1,j+1}$. The corresponding loop
in the original algorithm terminates as soon as a polynomial with leading term $y^j$ is discovered. After change of variables
and term ordering the termination condition transforms to $LT G_j=x^{(j-1)k}z^j$. Then the algorithm proceeds with increasing
root multiplicity in the same way as the original algorithm. However, the termination threshold $\Delta_0$
of $Merge$ algorithm has  to be changed. For root multiplicity $R$ 
after change of variables  the basis polynomials
should have leading terms $x^{\widehat t_i}z^i, 0 \leq i \leq u-1$,  such that
$\widehat t_i=t_i+(i-R)k$, where $x^{t_i}y^i$ are the leading
terms of the corresponding polynomials which would be obtained without re-encoding.  Therefore, the termination threshold in the modified algorithm
should be set to $\widehat \Delta_0=\sum_{i=0}^{u-1} \left(t_i+(i-R)k\right)=\Delta_0-Rku+k\frac{(u-1)u}{2}=\Delta_0-k\frac{R(R+1)}{2}+
k\frac{(u-1-R)(u-R)}{2}$, where $\Delta_0$ is the termination threshold derived from  \eqref{mLTSum}. 
If the sizes of the bases supplied to $Merge$ are $u_1$ and $u_2$, then $u=u_1+u_2-1$.
Hence, one can compute the termination threshold for $Merge$ as  $\delta(u_1,u_2,R)=(n-k)\frac{R(R+1)}{2}+k\frac{(u_1+u_2-2-R)(u_1+u_2-1-R)}{2}.$

\section{Numeric results}
\label{sNumRes}
This section presents simulation results illustrating the performance  of the proposed algorithm.
Karatsuba fast univariate polynomial multiplication algorithm \cite{KnuthArt2} was used at steps 2 and 7 in $Merge$ algorithm. 

\begin{figure}
\centering
\includegraphics[width=0.47\textwidth]{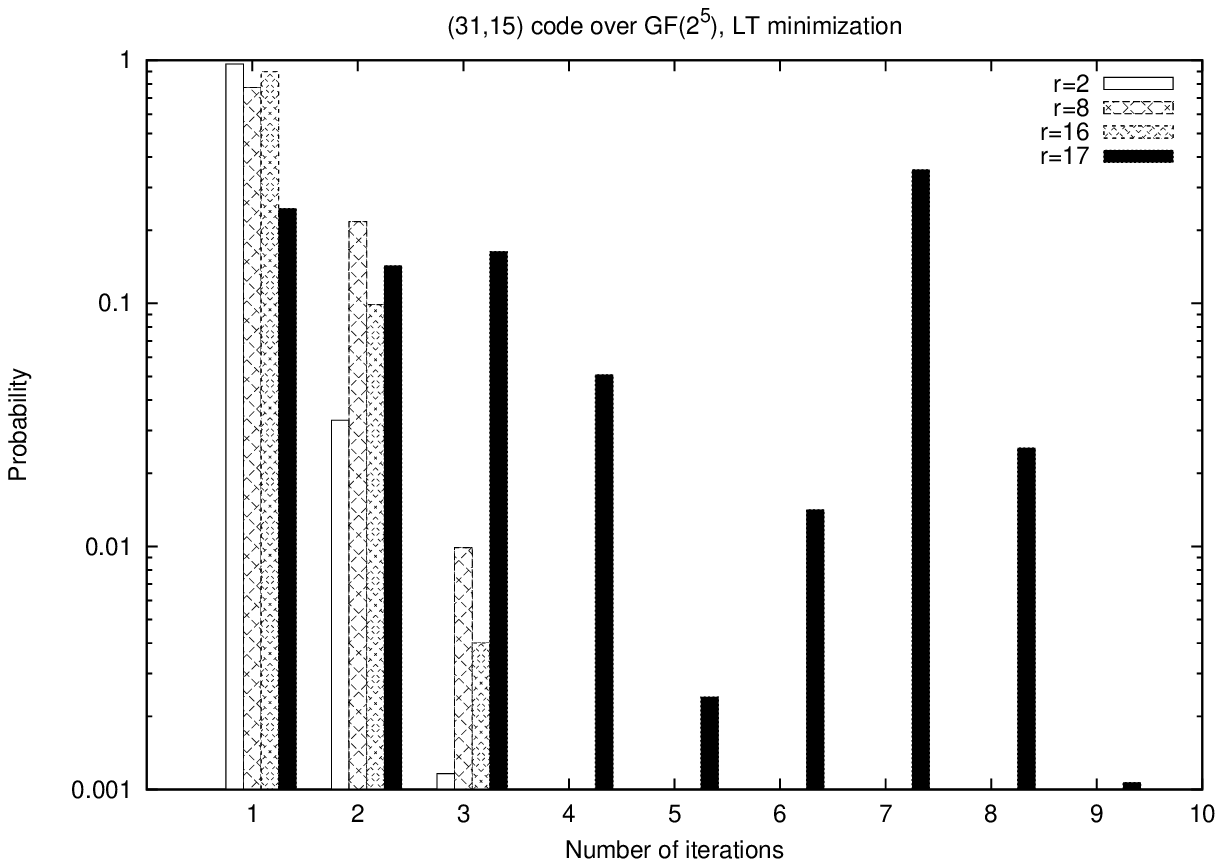}
\includegraphics[width=0.47\textwidth]{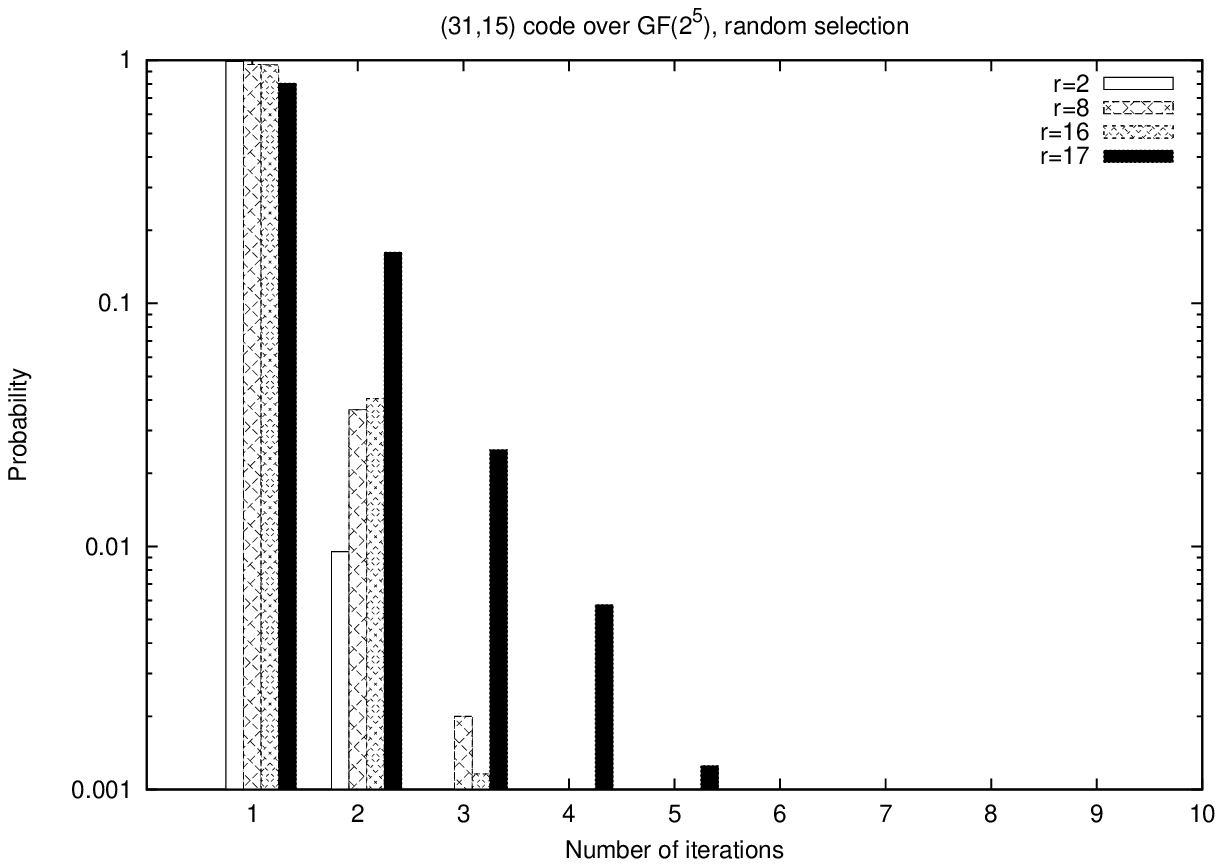}
\includegraphics[width=0.47\textwidth]{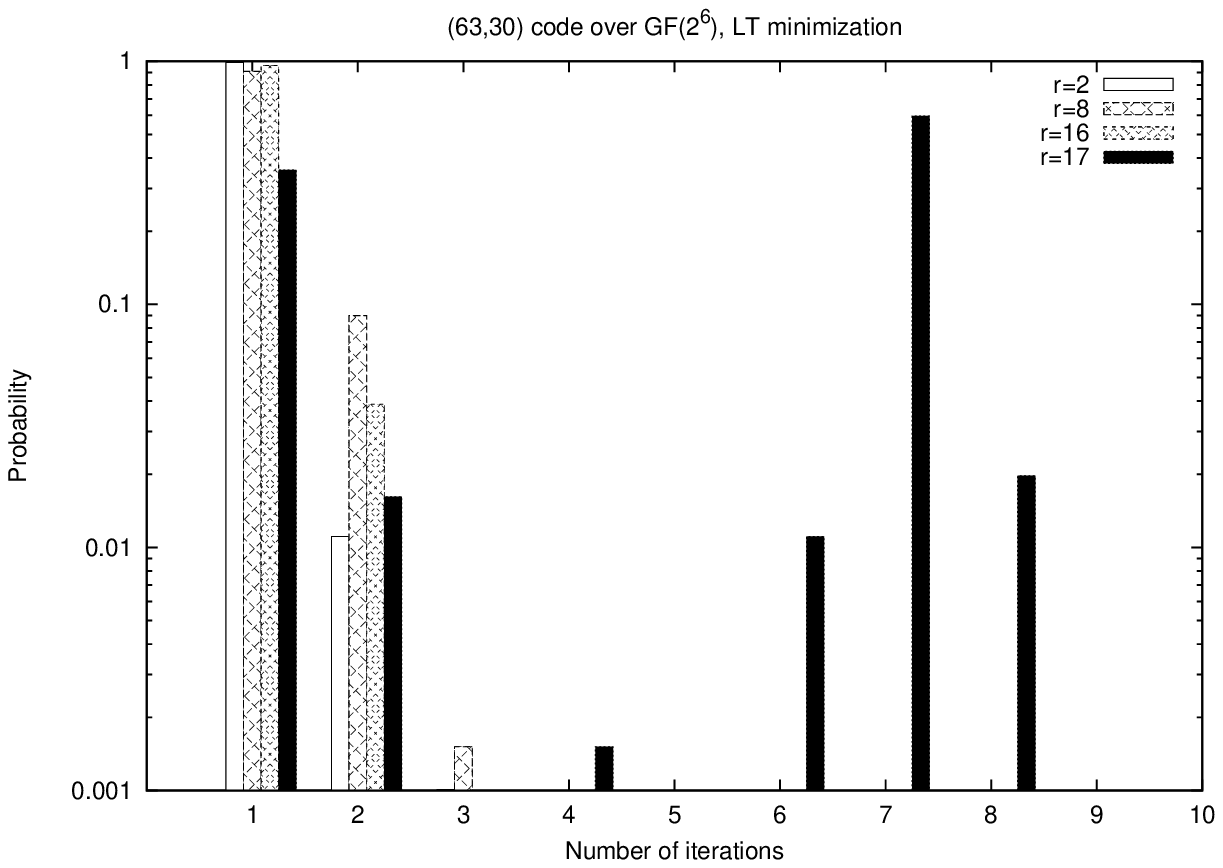}
\caption{Probability distribution of the number of iterations in $Merge$ algorithm}
\label{fMergeIt}
\end{figure}

Figure \ref{fMergeIt} illustrates the probability distribution of the number of iterations performed by $Merge$ algorithm while constructing Gr\"obner
bases of $I_r$ for different values of $r$.  $(31,15)$ code over $GF(2^5)$ and $(63,30)$ code over $GF(2^6)$ were considered.
First and third plots were obtained with the algorithm presented in Figure \ref{fFastModMul}. The second plot was obtained by replacing
leading term minimization on line 2 of the algorithm with random selection
of $Q_i(x,y)$ from the set $\{P_i(x,y)S_0(x,y),\ldots,P_{i-v}(x,y)S_v(x,y)\}$. It can be seen that in the latter  case the algorithm
indeed converges exponentially fast. In the first and third cases the convergence is still exponential for $r=2^m$, although a bit slower
than in the case of random polynomial selection. However, for the case of $r=17$ up to 8 iterations may be needed with high probability.
The reason is that constructing a Gr\"obner basis of $I_{17}=I_{16}\cdot I_1$ requires processing two different collections of polynomials $P_i(x,y), 0 \leq i \leq u$ and 
$S_j(x,y), 0 \leq j \leq v$ with $v<<u$. There is high probability that the smallest polynomial $S_{j_0}(x,y)$ is used almost for all $i$ on line $2$ of the $Merge$ algorithm.
This results in high algebraic and geometric  multiplicity of  $\widehat \Lambda(x)$ eigenvalues, i.e. the most probable partitions $\omega$ in \eqref{mSpecConvergProb}
are those with large $\omega_1$. This effect is compensated  by reduction of the total number of
 iterations in $Reduce$ algorithm.  Observe also that for $r=2^m$ the algorithm converges faster for the case of $|\F|=64$ compared to $|\F|=32$, as predicted by \eqref{mSpecConvergProb}.

\begin{figure}
\centering\includegraphics[width=0.47\textwidth]{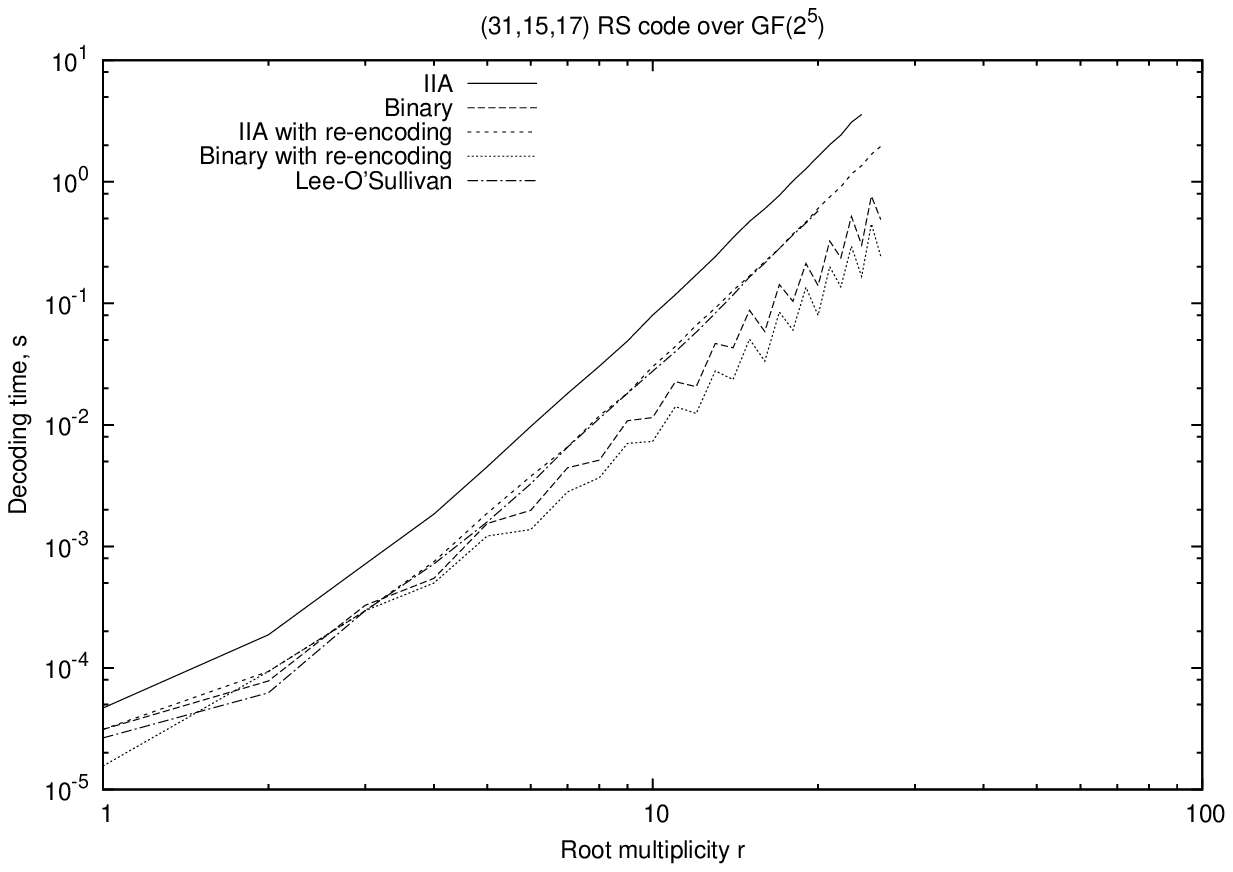}
\centering\includegraphics[width=0.47\textwidth]{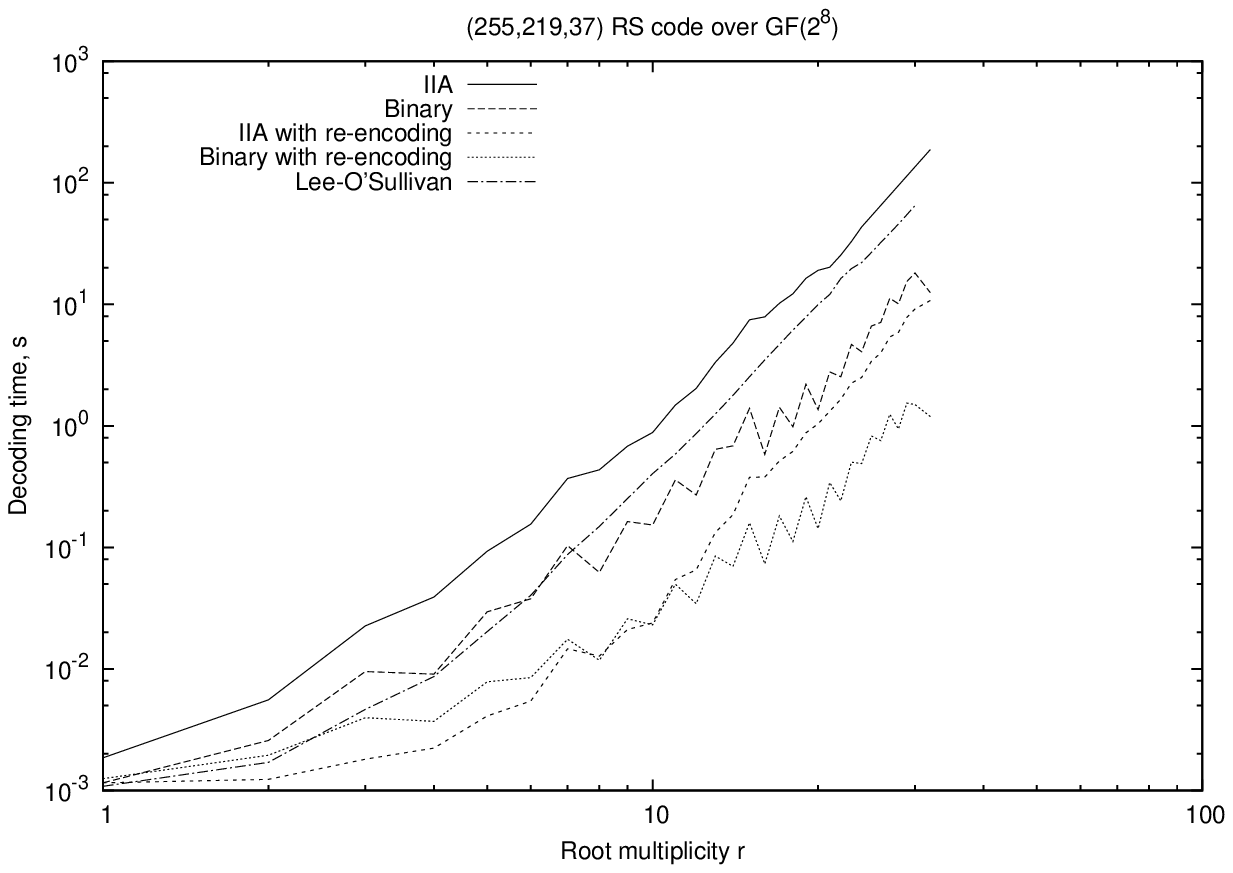}
\caption{Performance comparison of interpolation algorithms}
\label{fTiming}
\end{figure}
Figure \ref{fTiming} presents average list decoding time obtained with  IIA, Lee-O'Sullivan algorithm, 
proposed binary interpolation algorithm, re-encoding method based on IIA,
and binary interpolation algorithm with re-encoding.  $(31,15,17)$ and $(255,219,37)$ Reed-Solomon codes were considered. 
It can be seen that the proposed algorithm
provides up to 12  times lower complexity than IIA for the case of $(31,15,17)$ code, and up to 15 times lower complexity
for the case of $(255,219,37)$ code. Observe that the complexity of the proposed method increases
slower than for the case of  IIA, confirming thus the conclusion of Section  \ref{sComplexity}. The complexity of the Lee-O'Sullivan algorithm
turns out to be essentially the same as the one of IIA with re-encoding for rate-$1/2$ code, and exceeds it  considerably for high-rate code.
 Observe also, that 
in some cases increasing the root multiplicity reduces the complexity of the proposed interpolation method.
This represents the impact of the second term in \eqref{mNumOfIterations}, i.e. line 17 of the proposed algorithm. 

Observe also that the proposed algorithm outperforms the re-encoding method in the case of low-rate code. For 
the high-rate code the re-encoding method turns out to be better. However, employing re-encoding jointly with the proposed
method further reduces the complexity. The overall gain with respect to IIA is up to 22 times  for the case of $(31,15,17)$ code,
and up to 157 times for the case of $(255,219,37)$ code. 

\section{Conclusions}
In this paper a novel algorithm was proposed for the interpolation step of the Guruswami-Sudan list decoding algorithm.
It is based on the binary exponentiation algorithm, and can be considered as an extension of the Lee-O'Sullivan 
algorithm. The  proposed approach was shown to achieve significant asymptotical and practical gain compared to the case of iterative interpolation
algorithm. An important advantage of the new method is that its first step (first iteration of the WHILE loop in
 $Interpolate$ algorithm) coincides with  the Gao decoding algorithm, which is able to correct up to $(n-k)/2$ errors. Since the 
most likely error patterns can be corrected with this algorithm, one should invoke the remaining computationally expensive 
part of the proposed method only if the Gao algorithm does not produce a valid codeword. It is an open problem if
it is possible to terminate the interpolation algorithm as soon as it produces a bivariate polynomial
 containing all the solutions of a particular instance of the decoding problem, and avoid construction of $I_r$ basis 
for the worst-case $r$ given by \eqref{mRhoBound}-\eqref{mMinNumOfNonErrors}. Another interesting problem is to generalize
the proposed  algorithm to the case rational curve fitting problem considered in \cite{wu2008new}.

For the sake of simplicity, the proposed method was presented for the case of all interpolation points having the same multiplicity.
However, it can be extended to the case of weighted interpolation, allowing thus efficient implementation of soft-decision decoding.
Furthermore, it can be integrated with the re-encoding method, achieving thus  additional complexity reduction.

\section*{Acknowledgements}
The author thanks Dr. V.R. Sidorenko for many stimulating discussions. The author is indebted to
the  anonymous reviewers, whose comments have greatly improved the quality of the paper.


\begin{thebibliography}{10}
\providecommand{\url}[1]{#1}
\csname url@samestyle\endcsname
\providecommand{\newblock}{\relax}
\providecommand{\bibinfo}[2]{#2}
\providecommand{\BIBentrySTDinterwordspacing}{\spaceskip=0pt\relax}
\providecommand{\BIBentryALTinterwordstretchfactor}{4}
\providecommand{\BIBentryALTinterwordspacing}{\spaceskip=\fontdimen2\font plus
\BIBentryALTinterwordstretchfactor\fontdimen3\font minus
  \fontdimen4\font\relax}
\providecommand{\BIBforeignlanguage}[2]{{%
\expandafter\ifx\csname l@#1\endcsname\relax
\typeout{** WARNING: IEEEtran.bst: No hyphenation pattern has been}%
\typeout{** loaded for the language `#1'. Using the pattern for}%
\typeout{** the default language instead.}%
\else
\language=\csname l@#1\endcsname
\fi
#2}}
\providecommand{\BIBdecl}{\relax}
\BIBdecl

\bibitem{guruswami98improved}
V.~Guruswami and M.~Sudan, ``Improved decoding of {Reed-Solomon} and
  algebraic-geometric codes,'' \emph{IEEE Transactions on Information Theory},
  vol.~45, no.~6, pp. 1757--1767, September 1999.

\bibitem{koetter2003complexity}
R.~Koetter and A.~Vardy, ``A complexity reducing transformation in algebraic
  list decoding of {Reed-Solomon} codes,'' in \emph{Proceedings of ITW2003},
  March 2003.

\bibitem{koetter2003efficient}
R.~Koetter, J.~Ma, A.~Vardy, and A.~Ahmed, ``Efficient interpolation and
  factorization in algebraic soft-decision decoding of {Reed-Solomon} codes,''
  in \emph{Proceedings of IEEE International Symposium on Information Theory},
  Yokohama, Japan, June 29 -- July 4 2003, p. 365.

\bibitem{nielsen98decoding}
R.~R. Nielsen and T.~Hoholdt, ``Decoding {Reed-Solomon} codes beyond half the
  minimum distance,'' in \emph{Proceedings of the International Conference on
  Coding Theory and Cryptography}.\hskip 1em plus 0.5em minus 0.4em\relax
  Mexico: Springer-Verlag, 1998.

\bibitem{sauer1998polynomial}
T.~Sauer, ``Polynomial interpolation of minimal degree and {Gr\"obner} bases,''
  in \emph{Gr{\"o}bner Bases and Applications (Proceedings of the International
  Conference ``33 Years of Gr{\"o}bner Bases'')}, ser. London Mathematical
  Society Lecture Notes, B.~Buchberger and F.~Winkler, Eds., vol. 251.\hskip
  1em plus 0.5em minus 0.4em\relax Cambridge University Press, 1998, pp.
  483--494.

\bibitem{koetter1996fast}
R.~Koetter, ``Fast generalized minimum-distance decoding of algebraic-geometry
  and {Reed-Solomon} codes,'' \emph{IEEE Transactions On Information Theory},
  vol.~42, no.~3, May 1996.

\bibitem{okeefe2002grobner}
H.~O'Keefe and P.~Fitzpatrick, ``Gr{\"o}bner basis solutions of constrained
  interpolation problems,'' \emph{Linear Algebra and Applications}, vol. 351,
  pp. 533--551, 2002.

\bibitem{ma2004divide}
J.~Ma, P.~Trifonov, and A.~Vardy, ``Divide-and-conquer interpolation for list
  decoding of {Reed-Solomon} codes,'' in \emph{Proceedings of IEEE
  International Symposium on Information Theory}, Chicago, USA, June 27 -- July
  2 2004, p. 387.

\bibitem{trifonov2007interpolationEng}
P.~Trifonov, ``Interpolation in list decoding of {Reed-Solomon} codes,''
  \emph{Problems of Information Transmission}, vol.~43, no.~3, pp. 190--198,
  2007.

\bibitem{lee2006interpolation}
K.~Lee and M.~E. O'Sullivan, ``An interpolation algorithm using {Gr\"obner}
  bases for soft-decision decoding of {Reed-Solomon} codes,'' in
  \emph{Proceedings of IEEE International Symposium on Information Theory},
  2006.

\bibitem{alekhnovich2005linear}
M.~Alekhnovich, ``Linear {Diophantine} equations over polynomials and soft
  decoding of {Reed-Solomon} codes,'' \emph{IEEE Transactions On Information
  Theory}, vol.~51, no.~7, pp. 2257--2265, July 2005.

\bibitem{lee2008list}
K.~Lee and M.~E. O'Sullivan, ``List decoding of {Reed-Solomon} codes from a
  {Gr\"obner} basis perspective,'' \emph{Journal of Symbolic Computation},
  vol.~43, no.~9, September 2008.

\bibitem{trifonov2008relationship}
P.~Trifonov, ``On the relationship of some {Reed-Solomon} decoding
  algorithms,'' in \emph{Proceedings of "Coding Theory Days in
  Saint-Petersburg" Workshop}, 2008, pp. 83--87.

\bibitem{becker93grobner}
T.~Becker and V.~Weispfenning, \emph{Gr{\"o}bner Bases. A Computational
  Approach to Commutative Algebra}.\hskip 1em plus 0.5em minus 0.4em\relax New
  York: Springer, 1993.

\bibitem{gao2003new}
S.~Gao, ``A new algorithm for decoding {Reed-Solomon} codes,'' in
  \emph{Communications, Information and Network Security}, V.~Bhargava, H.~V.
  Poor, V.~Tarokh, and S.~Yoon, Eds.\hskip 1em plus 0.5em minus 0.4em\relax
  Kluwer, 2003, pp. 55--68.

\bibitem{fedorenko2005simple}
S.~Fedorenko, ``A simple algorithm for decoding {Reed-Solomon} codes and its
  relation to the {Welch-Berlekamp} algorithm,'' \emph{IEEE Transactions On
  Information Theory}, vol.~51, no.~3, pp. 1196--1198, March 2005.

\bibitem{KnuthArt2}
D.~E. Knuth, \emph{The Art of Computer Programming}.\hskip 1em plus 0.5em minus
  0.4em\relax Addison-Wesley, 1973, vol.~2.

\bibitem{gohberg2009matrix}
I.~Gohberg, P.~Lancaster, and L.~Rodman, \emph{Matrix polynomials}.\hskip 1em
  plus 0.5em minus 0.4em\relax SIAM, 2009.

\bibitem{dixon2004degree}
J.~D. Dixon and D.~Panario, ``The degree of the splitting field of a random
  polynomial over a finite field,'' \emph{The Electronic Journal of
  Combinatorics}, vol.~11, no.~1, 2004.

\bibitem{leontev2006roots}
V.~Leont'ev, ``Roots of random polynomials over a finite field,''
  \emph{Mathematical Notes}, vol.~80, no.~2, pp. 300--304, 2006, translated
  from Matematicheskie Zametki, vol. 80, no. 2, 2006, pp. 313--316.

\bibitem{ma2007complexity}
J.~Ma and A.~Vardy, ``A complexity reducing transformation for the
  {Lee}---{O'Sullivan} interpolation algorithm,'' in \emph{Proceedings of IEEE
  International Symposium on Information Theory}, 2007.

\bibitem{wu2008new}
Y.~Wu, ``New list decoding algorithms for {Reed-Solomon} and {BCH} codes,''
  \emph{IEEE Transactions On Information Theory}, vol.~54, no.~8, August 2008.

\end{thebibliography}

\end{document}